\newcommand{\E}{\mathbb E}
\newcommand{\EE}{\mathbb E}
\newcommand{\ee}{\mathcal{E}}
\newcommand{\dd}{\mathrm{d}}
\newcommand{\pp}{\mathbb{P}}
\newcommand{\qq}{\mathbb{Q}}
\newcommand{\rr}{\mathbb{R}}
\newcommand{\eq}{\begin{equation}}
\newcommand{\en}{\end{equation}}
\newcommand{\bone}{\mathbf{1}}
\newcommand{\F}{\mathcal{F}}
\newcommand{\G}{\mathcal{G}}
\newcommand{\A}{\mathcal{A}}
\newcommand{\W}{W^{\perp}}
\newtheorem{theorem}{Theorem}[section]
\newtheorem*{theorem*}{Theorem}
\newtheorem{lemma}[theorem]{Lemma}
\newtheorem{definition}[theorem]{Definition}
\newtheorem{proposition}[theorem]{Proposition}
\newtheorem{corollary}[theorem]{Corollary}
\newtheorem{assumption}[theorem]{Assumption}
\theoremstyle{remark}
\newtheorem{rmk}[theorem]{Remark}
\title[Power Mixture FPPs]{Power Mixture Forward Performance Processes}
\author{Levon Avanesyan}
\address{ORFE Department, Princeton University, Princeton, NJ 08544, USA}
\email{levon.avanesyan23@gmail.com}
\thanks{Princeton University}
\author{Ronnie Sircar}
\address{ORFE Department, Princeton University, Princeton, NJ 08544, USA}
\email{sircar@princeton.edu}
\date{\today}
\begin{document}

\begin{abstract}
We consider the forward investment problem in market models where the stock prices are continuous semimartingales adapted to a Brownian filtration. We construct a broad class of forward performance processes with initial conditions of power mixture type, $u(x) = \int_{\mathbb{I}} \frac{x^{1-\gamma}}{1-\gamma }\nu(\dd \gamma)$.  We proceed to define and fully characterize two-power mixture forward performance processes with constant risk aversion coefficients in the interval $(0,1)$, and derive properties of two-power mixture forward performance processes when the risk aversion coefficients are continuous stochastic processes. Finally, we discuss the problem of managing an investment pool of two investors, whose respective preferences evolve as power forward performance processes.\\

\smallskip
\noindent \textbf{Keywords.} optimal investment, forward performance processes\\

\smallskip
\noindent \textbf{AMS Subject Classificaton:} 91B16, 60G44, 91G10
\end{abstract}

\maketitle

\section{Introduction}
Consider an investor with initial capital $X_0 = x > 0$ aiming to invest in a market consisting of a riskless bank account with zero interest rate and $n \geq 1$ stocks whose price processes $S^1,\, S^2, \, \ldots, \, S^n$ follow the continuous semimartingale dynamics 
\begin{align}\label{eq:stock.dynamics}
    \frac{\dd S_t^i}{S_t^i} = \mu_t^i\dd t + \sum_{j = 1}^{d_W} \sigma_t^{ji}\dd W_t^j, \quad i = 1,\, 2,\, \ldots,\, n. 
\end{align}
Here, the stochastic processes $\mu = (\mu^1, \mu^2, \ldots, \mu^n)$ and $\sigma = (\sigma^{ji})_{j,i=1}^{d_W, n}$, for some $d_W \geq 1$, are continuous and adapted to the filtration $\mathbb{F} = (\F_t)_{t\geq 0}$ generated by a pair of $d_W$ and $d_{\W}$-dimensional standard Brownian motions $(W, \W)$, for some $d_{\W} \geq 0$, on a filtered probability space $(\Omega, \F, \mathbb{F}, \pp)$. When $d_{\W}=0$ the market is complete, and incomplete otherwise.

Given an initial investment capital $x$, and a choice of a self-financing portfolio with allocations $\pi_t = (\pi_t^1, \pi_t^2, \ldots,\pi_t^n)$ in units of fraction of current wealth among the $n$ stocks, the investor's wealth generated from holding this portfolio will have the dynamics
\begin{align}\label{eq:wealth.process}
    \frac{\dd X_t^{\pi}}{X_t^{\pi}} &= (\sigma_t\pi_t)^{\top} \lambda_t \dd t +(\sigma_t \pi_t)^{\top} \dd W_t, \quad X_0^{\pi} = x, 
\end{align}
with $\lambda_t=(\sigma_t^{-1})^\top\mu_t$ denoting the Sharpe ratio, $\sigma_t^{-1}$ the Moore-Penrose inverse of $\sigma_t$, and $\top$ denoting transpose. 
\begin{assumption}\label{assump.var}
	The process  $\sup_{t \in [0,T]}\|\lambda_t\|$
	is bounded almost surely for all $T>0$.
\end{assumption}
The choice of an optimal portfolio for the investor is determined by the admissible portfolio set and the investor's personalized investment performance criterion.

\begin{definition}\label{def.admissibility set}
Consider an $\F_t$-progressively measurable portfolio process $\pi_t$.
\begin{enumerate}[i.]
    \item We will say that the portfolio process is $T$-admissible, $\pi \in \A_T$, if for all $t\leq T$ $X_t^{\pi} \geq 0$, $\int_0^t |X_s^{\pi}\pi_s^{\top}\mu_s| \dd s < \infty$, and $\int_0^t |X_s^{\pi} \sigma_s \pi_s |^2 \dd s < \infty$. 
    \item The portfolio process is forward admissible, $\pi \in \A$, if it is $T$-admissible for all $T> 0$. That is $\A = \underset{T>0}{\cap} \A_T$.
\end{enumerate}
\end{definition}
\begin{definition}
Consider an $\F_t$-progressively measurable portfolio process $\pi_t$, a compact interval $\mathbb{I}$, a positive measure $\nu(\cdot)$, and $v \in [1,\infty)$.
\begin{enumerate}[i.]
    \item We will say that the portfolio process is $T$-$v$-admissible, $\pi \in \A_T^{v}$, if $\pi \in \A_T$,
    \begin{align}\label{eq:admiss.integral}
        \int_0^T\int_{\mathbb{I}}\E\big[(X_t^{\pi})^{2v(1-\gamma)}\|\sigma_t\pi_t\|^{2v}\big]\nu(\dd \gamma) \dd t < \infty,
    \end{align}
    and for some $u>1$
    \begin{align}\label{eq:admiss.sup}
        \sup_{t\in[0,T]}\int_{\mathbb{I}}\E\big[(X_t^{\pi})^{2uv(1-\gamma)}\big]\nu(\dd \gamma) < \infty.
    \end{align}
    \item The portfolio process is $v$-forward admissible, $\pi \in \A^{v}$, if it is $T$-admissible for all $T> 0$. That is $\A^{v} = \underset{T>0}{\cap} \A_T^v$.
\end{enumerate}
\end{definition}

In this paper we study the optimal investment problem under forward investment criteria, originally introduced and developed in \cite{musiela2006investments}, \cite{musiela2007investment}, as well as in \cite{HENDERSON20071621}. The forward problem addresses investment with an a priori unknown time horizon over which the investor's utility function may evolve. The forward investment problem then is to find an $\mathbb{F}$-progressively measurable process $U_{\cdot}(\cdot)\,:\, [0,\infty) \times (0,\infty) \to \rr$ and a $\pi^* \in \A$ satisfying
\begin{enumerate}[(a)]
    \item with probability one, all functions $x \mapsto U_t,\, t\geq 0$ are strictly concave and increasing;
    \item for each $\pi \in \A$, the process $U_t(X_t^{\pi}),\, t\geq 0$ is an $\mathbb{F}$-supermartingale;
    \item the process $U_t(X_t^{\pi^*}),\, t\geq 0$ is an $\mathbb{F}$-martingale.
\end{enumerate}

The process $U_{\cdot}(\cdot)$ is referred to as a forward performance process (FPP) and its fixed-time projections $U_t(\cdot),\, t\geq 0 $ should be thought of as the (random) utility functions of an investor who is reacting to the information flow $\mathbb{F}$. When conditions (b) and (c) hold locally the process $U_{\cdot}(\cdot)$ is referred to as a ``local FPP'', and when the conditions hold in a true sense the process $U_{\cdot}(\cdot)$ will be referred to as a ``true FPP''. Condition (c) then characterizes the optimal allocations $\pi^* \in \A$ for such an investor.

If $U$ is an It\^{o} process in $t$ and twice differentiable in the wealth parameter, then in \cite{musiela2010stochastic} it was shown that $U$ is a local FPP if and only if it solves the following stochastic partial differential equation
\begin{align}\label{eq.FPP.SPDE}
    \dd U_t(x) = \frac{1}{2} \frac{| \partial_x U_t(x) (\sigma_t^{\top})^{-1}\mu_t + \sigma_t^{\top} (\sigma_t^{\top})^{-1} \partial_x a^W_t(x)|^2}{\partial_{xx}^2 U_t(x)} \dd t + a_t(x)\cdot \dd (W_t, W_t^{\perp}),
\end{align}
where the FPP is characterized by the forward volatility process $a = (a^W,\, a^{W^{\perp}})$. Not only it is hard to solve this SPDE, but also we do not even know what initial conditions would yield existence of a solution. For general initial conditions, a characterization of local FPPs through a solution of a non-linear PDE was first given in \cite{musiela2010portfolio2}. A lot of literature was then dedicated to studying FPPs with the power utility of wealth initial condition
\begin{align}
    U_0(x) = C_0\frac{x^{1-\gamma}}{1-\gamma},
\end{align}
where $C_0 > 0$ and $\gamma \in (0,\infty)/\{1\}$. For Markovian factor market models explicit classes of power local FPPs were constructed in \cite{nadtochiy2014class}, \cite{nadtochiy2015optimal}, \cite{liang2017representation}, and \cite{avanesyan2020construction}. Asymptotic results have been developed in \cite{shkolnikov2015asymptotic}. In recent years, non-Markovian factor market models have gained traction in the literature. Most importantly, in \cite{gatheral2018volatility} it has been demonstrated that a non-Markovian factor market model, where the stochastic volatility is driven by a fractional Brownian motion, fits the observed financial time series quite well. Moreover the said factor is not a semimartingale, but is adapted to a Brownian filtration. Hence, the market dynamics in \eqref{eq:stock.dynamics} include the case of the so-called rough fractional stochastic volatility. In this general setting power FPPs have been fully characterized in \cite{Choulli_2017} and \cite{bo2018forward}, without and with portfolio constraints respectively.

In general, investors do not have CRRA preferences. And even if individual investors do have constant relative risk aversion coefficients, portfolio managers or funds have to satisfy multiple different investors with different risk preferences at the same time. On a more personal level, in the life of a couple there exist major investment decisions that can only be made by pooling the resources.  By maximizing an objective that is a convex combination of different power utilities one can hope to achieve reasonable performance for all types of investors or partners. Hence, we construct FPPs for initial utilities that are formed through convex combinations of utility functions of power form 
$$U_0(x) = \int_{\mathbb{I}} \frac{x^{1-\gamma}}{1-\gamma} \nu (\dd \gamma),$$
where $\mathbb{I}\subset (0,\infty)/ \{ 1 \}$ is the compact interval of the risk aversion coefficients in question. One can think of $\mathbb{I}$ as a pool of investors with different risk aversion parameters, whereby the measure $\nu(\cdot)$ assigns the relative weight of particular risk aversions in the investor pool. In this context, when $\nu(\cdot)$ is a point mass we are dealing with a single rational investor with CRRA preferences, and hence the respective FPP is of power form. However, when $\nu(\cdot)$ is a two-point mass, we would be dealing with an investment pool consisting of two such investing partners, and the respective FPP is what we will call of two-power mixture form. A natural example of such investment pools are joint investment accounts or major financial decisions made by couples. Thus, in Section \ref{sec:two-power-mixture} we extensively characterize the two-power mixture FPPs, and also analyze the case when the joint utility of the couple is not an FPP.

Mixtures of utilities were first introduced in portfolio optimization literature in \cite{fouque2017portfolio} as a sum of two CRRA utilities with differing risk aversion coefficients, which corresponds to the case when $\nu(\cdot)$ is a discrete point-mass measure. Some asymptotic results are derived for the Merton problem, however no results have been derived for FPPs with initial conditions of mixture type. The only work that we are aware of that deals with consistent utilities with initial conditions of power mixture type is \cite{el2017construction}, where the authors consider the problem of finding a dynamic equilibrium by maximizing the aggregate utility of the economy.  Below for the first time we construct such local FPPs and proceed to derive conditions that ensure that the local FPPs are in fact true FPPs.

The results are presented in the following fashion. In Section \ref{sec:power-mixture} we construct a broad class of general power mixture type true FPPs. Using this result, in Section \ref{sec:power_fpp} we obtain a class of power true FPPs, and discuss the meaning of the parametrization. In Section \ref{sec:two-power-mixture} we fully characterize the class of two-power mixture true FPPs when the risk aversion parameters are constant and in the interval $(0,1)$, as well as obtain necessary conditions for the general parametrization. We wrap up Section \ref{sec:two-power-mixture} by discussing the problem of dynamic pooled investment of two rational agents, whose individual preferences evolve as power type FPPs with different risk aversion parameters. In Section \ref{sec:three-power} we construct a three-power mixture FPP, that exhibits the limitations of necessary conditions derived in Section \ref{sec:two-power-mixture}. Finally, in Section \ref{sec:conclusion} we summarize our results and point towards future directions.
\section{General Power Mixture Forward Performances}\label{sec:power-mixture}
Consider an investment pool of investors with CRRA preferences whose risk aversion coefficients belong to a compact interval $\mathbb{I} \subset (0,\infty)/\{1\}$. Let the Borel measure $\nu(\cdot)$ denote the relative weight given to investors based on their risk aversion.   The weighting measure can be chosen based on various criteria, e.g. relative sizes of investment associated with each risk aversion coefficient, investment manager's preferences. The joint utility in the investment pool then is given by
\begin{align}\label{eq:mixture_initial_condition}
U_0(x) = \int_{\mathbb{I}}\frac{x^{1-\gamma}}{1-\gamma} \nu(\dd \gamma).
\end{align}
In this section our aim is to characterize a broad class of FPPs with power mixture initial condition \eqref{eq:mixture_initial_condition}. Thereby, we construct dynamically consistent investment criteria for the above-mentioned investment pools.
\subsection{General characterization of mixture FPPs}
 For a stochastic process $(M_t)_{t\geq 0}$, from here on we will denote its stochastic exponential by $$\ee(M_t) : = \exp\Big(M_t - \frac{1}{2}\langle M \rangle_t\Big).$$
 In our first result we characterize all local FPPs of power-mixture type.
\begin{theorem}\label{thm:local.FPP}
    Suppose the market model \eqref{eq:stock.dynamics}, a compact set $\mathbb{I} \subset (0,\infty)/\{1\}$, and a fixed risk aversion coefficient $\gamma_0 \in \mathbb{I}$. Then, for initial preferences 
    $$U_0(x) = \int_{\mathbb{I}}\frac{x^{1-\gamma}}{1-\gamma} \nu(\dd \gamma),$$
     where $\nu(\cdot)$ is a positive measure, the process
    \begin{align}\label{eq:local.FPP}
        U_t(x) = \int_{\mathbb{I}} \frac{x^{1-\gamma}}{1-\gamma} \ee(M_t^{\gamma}) \ee(V_t^{\gamma})\nu(\dd \gamma)
    \end{align}
    is a local FPP, with an associated optimal portfolio given by a solution to
    \begin{align}\label{eq:optimal.portfolio}
        \sigma_t\pi_t^* &= \frac{1}{\gamma_0}(\lambda_t + H_t^{\gamma_0}),
    \end{align}
    where the pairs $(M_t^{\gamma}, V_t^{\gamma})$ are given by
    \begin{align}\label{eq:local_mtg.gamma}
         M_t^{\gamma} &= \int_{0}^{t} H_s^{\gamma} \cdot \dd W_s + \int_{0}^{t} J_s^{\gamma} \cdot \dd \W_s, \quad H_t^{\gamma} = \frac{\gamma - \gamma_0}{\gamma_0}\lambda_t + \frac{\gamma}{\gamma_0} H_t^{\gamma_0},\\\label{eq:FV.gamma}
    V_t^{\gamma} & = \int_0^t v_s^{\gamma} \dd s,\quad v_t^{\gamma} = -\frac{1-\gamma}{2\gamma} |\lambda_t + H_t^{\gamma} |^2,
    \end{align}
    and $H_t^{\gamma_0}, J_t^{\gamma} \in \F_t$ are such that for all admissible $\pi$, $\pp-$almost surely
    \begin{align}\label{eq:fubini.suff}
    \begin{split}
        \int_{\mathbb{I}} \bigg(\int_0^t &(X_s^{\pi})^{2(1-\gamma)}\big(\ee(M_t^{\gamma})\ee(V_t^{\gamma})\big)^2\\
        &\times\bigg( \bigg|\frac{1}{1-\gamma}H_s^{\gamma} 
        + \sigma_s\pi_s\bigg|^2 + \bigg|\frac{1}{1-\gamma}J_s^{\gamma}\bigg|^2 \bigg) \dd s\bigg)^{\frac{1}{2}}\nu(\dd \gamma) < \infty.
        \end{split}
    \end{align}
\end{theorem}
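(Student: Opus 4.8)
The plan is to verify directly the three defining conditions of a local FPP. Condition (a) is immediate: for every $\gamma \in (0,\infty)/\{1\}$ the map $x \mapsto x^{1-\gamma}/(1-\gamma)$ is strictly increasing (its derivative is $x^{-\gamma}>0$) and strictly concave (its second derivative is $-\gamma x^{-\gamma-1}<0$); since $\ee(M_t^\gamma)\ee(V_t^\gamma)>0$ almost surely and $\nu$ is a positive measure, both properties are preserved under multiplication by $\ee(M_t^\gamma)\ee(V_t^\gamma)$ and integration against $\nu(\dd\gamma)$.

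For (b) and (c) I would first compute, term by term, the semimartingale decomposition of $\frac{1}{1-\gamma}(X_t^\pi)^{1-\gamma}\ee(M_t^\gamma)\ee(V_t^\gamma)$ along the wealth process. Writing $\rho_t := \sigma_t\pi_t$, It\^o's formula gives
\begin{align*}
\frac{\dd (X_t^\pi)^{1-\gamma}}{(X_t^\pi)^{1-\gamma}} &= (1-\gamma)\rho_t^\top\dd W_t + \Big[(1-\gamma)\rho_t^\top\lambda_t - \tfrac{(1-\gamma)\gamma}{2}|\rho_t|^2\Big]\dd t,
\end{align*}
while, since $V^\gamma$ is absolutely continuous, the stochastic exponential obeys $\dd\big(\ee(M_t^\gamma)\ee(V_t^\gamma)\big) = \ee(M_t^\gamma)\ee(V_t^\gamma)\big(H_t^\gamma\cdot\dd W_t + J_t^\gamma\cdot\dd\W_t + v_t^\gamma\,\dd t\big)$. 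Combining these through the product rule, and substituting $v_t^\gamma = -\frac{1-\gamma}{2\gamma}|\lambda_t+H_t^\gamma|^2$ from \eqref{eq:FV.gamma}, the drift of the $\gamma$-term collapses, after completing the square, to
\begin{align*}
-\frac{\gamma}{2}\,(X_t^\pi)^{1-\gamma}\ee(M_t^\gamma)\ee(V_t^\gamma)\,\Big|\rho_t - \tfrac{1}{\gamma}\big(\lambda_t + H_t^\gamma\big)\Big|^2.
\end{align*}

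To pass from this term-by-term identity to a genuine semimartingale decomposition of $U_t(X_t^\pi)$ itself, I would invoke a stochastic Fubini theorem in order to interchange $\int_{\mathbb I}\cdots\nu(\dd\gamma)$ with the $\dd W$- and $\dd\W$-integrals; this is precisely where the integrability hypothesis \eqref{eq:fubini.suff} is used, as it is exactly the $L^1(\nu;L^2(\dd s))$ bound on the diffusion coefficients of the $\gamma$-terms that legitimizes the interchange. Once the interchange is justified, the finite-variation part of $U_t(X_t^\pi)$ equals $-\int_{\mathbb I}\frac{\gamma}{2}(X_t^\pi)^{1-\gamma}\ee(M_t^\gamma)\ee(V_t^\gamma)|\rho_t-\frac1\gamma(\lambda_t+H_t^\gamma)|^2\,\nu(\dd\gamma)\le 0$, because $\gamma>0$ and the remaining factors are nonnegative. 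After a standard localization along stopping times that reduce the local-martingale part, this gives that $U_t(X_t^\pi)$ is a local supermartingale, establishing (b).

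For (c), the key is the specific affine structure of $H_t^\gamma$ in \eqref{eq:local_mtg.gamma}: substituting it yields $\lambda_t + H_t^\gamma = \frac{\gamma}{\gamma_0}(\lambda_t + H_t^{\gamma_0})$, so that $\frac1\gamma(\lambda_t+H_t^\gamma) = \frac{1}{\gamma_0}(\lambda_t+H_t^{\gamma_0})$ is \emph{independent of} $\gamma$ and equals $\sigma_t\pi_t^*$ from \eqref{eq:optimal.portfolio}. Hence the single portfolio $\pi^*$ annihilates the completed-square term for every $\gamma$ simultaneously, so each term's drift---and therefore the entire drift of $U_t(X_t^{\pi^*})$---vanishes identically, giving the local-martingale property (c). The main obstacle I anticipate is not the algebra but the analytic bookkeeping: rigorously justifying the stochastic Fubini interchange under \eqref{eq:fubini.suff}, checking that $\pi^*$ is admissible, and producing a localizing sequence that reduces the local-martingale parts uniformly in $\gamma \in \mathbb I$, so that the supermartingale and martingale conclusions genuinely hold in the local sense claimed.
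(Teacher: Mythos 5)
Your proposal is correct and follows essentially the same route as the paper: a verification argument that computes the drift of each $\gamma$-component via It\^{o}'s formula, identifies the drift as a nonpositive quadratic in $\sigma_t\pi_t$ vanishing at $\sigma_t\pi_t^*=\frac{1}{\gamma_0}(\lambda_t+H_t^{\gamma_0})$ (the paper maximizes the concave drift $D^\gamma$ at its critical point, you complete the square --- the same computation), notes that the affine form of $H^\gamma$ makes this optimizer independent of $\gamma$, and invokes the stochastic Fubini theorem under \eqref{eq:fubini.suff} to pass the decomposition through $\nu(\dd\gamma)$. The only detail the paper adds that you omit is a brief remark that the drift integral may equal $-\infty$ for some admissible portfolios, which is harmless since such portfolios cannot be optimal.
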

\begin{rmk}
    The processes $H_t^{\gamma_0}$ and $\{J_t^{\gamma}\}_{\gamma \in \mathbb{I}}$ are not fixed, thereby giving us significant degrees of freedom in constructing FPPs. We discuss interpretations for these processes in the end of this subsection, as well as in Section \ref{subsec.market-view}.
\end{rmk}

\begin{proof}
We will prove this theorem by a verification argument. Fix any $t>0$. Now, for all $\gamma \in \mathbb{I}$ define the processes 
\[U_t^{\gamma}(x) := \frac{x^{1-\gamma}}{1-\gamma} \ee(M_t^{\gamma}) \ee (V_t^{\gamma}),\]
and let us first show that they are indeed local FPPs. For an arbitrary admissible portfolio $\pi$ we have
\begin{align}
    U_t^{\gamma}(X_t^{\pi}) = \frac{1}{1-\gamma}\ee\bigg(\int_0^t (\sigma_s\pi_s)^{\top} \lambda_s\dd s + \int_0^t(\sigma_s\pi_s)^{\top}\dd W_s)\bigg)^{1-\gamma}\ee(M_t^{\gamma}) \ee (V_t^{\gamma}).
\end{align}
Applying It\^{o}'s formula to the process $U_t^{\gamma}(X_t^{\pi})$ we get
\begin{align}
    \frac{\dd U_t^{\gamma}(X_t^{\pi})}{(X_t^{\pi})^{1-\gamma} \ee(M_t^{\gamma}) \ee(V_t^{\gamma})}
    =& \: \frac{1}{1-\gamma}\dd M_t^{\gamma} + (\sigma_t \pi_t)^{\top} \dd W_t  \\
    &+ \bigg(\frac{1}{1-\gamma} v_t^{\gamma} +  (\sigma_t \pi_t)^{\top}( \lambda_t + H_t^{\gamma}) -\frac{\gamma}{2} |\sigma_t \pi_t|^2 \bigg) \dd t \\
    :=&\: D^{\gamma}(\pi_t) \dd t + \frac{1}{1-\gamma}\dd M_t^{\gamma} +  (\sigma_t \pi_t)^{\top} \dd W_t.
\end{align}
Note that $D^{\gamma}(\pi_t)$ is a globally concave function of $\pi_t$, and thus we can find an optimal $\pi_t^{\gamma*}$ by finding a critical point of the function $D^{\gamma}(\cdot)$
\begin{align}
   \partial_{\pi} D^{\gamma}(\pi_t^{\gamma*}) = \sigma_t^{\top}(\lambda_t + H_t^{\gamma}) - \gamma \sigma_t^{\top}\sigma_t \pi_t^{\gamma*},
\end{align}
which yields that a portfolio process solving the equation
\begin{align}
    \sigma_t\pi_t^{\gamma*} = \frac{1}{\gamma}(\lambda_t + H_t^{\gamma})
\end{align}
would maximize the function $D^{\gamma}(\cdot)$. Plugging in the expressions for $H_t^{\gamma}$ we get
\begin{align}
    \sigma_t\pi_t^{\gamma*} = \frac{1}{\gamma_0}(\lambda_t + H_t^{\gamma_0}) = \sigma_t \pi_t^{\gamma_0^*} =: \sigma_t \pi_t^{*}.
\end{align}
Let us now calculate the maximal value of the function $D^{\gamma}(\cdot)$,
\begin{align}
    D^{\gamma}(\pi_t^{*})
   &= \frac{1}{1-\gamma} v_t^{\gamma} + \frac{1}{2\gamma}|\lambda_t + H_t^{\gamma} |^2 = 0,
\end{align}
where the last equality follows from \eqref{eq:local_mtg.gamma}, \eqref{eq:FV.gamma}. Thus, for all $\pi \in \A$ the drift will be non-positive. Thereby, for all $\gamma \in \mathbb{I}$ the processes $U_t^{\gamma}(X_t^{\pi})$ are local supermartingales for all $\pi \in \A$ and local martingales for the portfolio $\pi^{*}$. Note that our initial process can be written as
\begin{align}
    U_t(X_t^{\pi}) =& \: \int_{\mathbb{I}} U_t^{\gamma}(X_t^{\pi}) \nu( \dd \gamma) \\
    =& \: \int_{\mathbb{I}} \int_0^t(X_s^{\pi})^{1-\gamma}\ee(M_s^{\gamma})\ee(V_s^{\gamma}) D^{\gamma}(\pi_s) \dd s \: \nu (\dd \gamma) \\
    & + \sum_{i = 1}^{d_W} \int_{\mathbb{I}} \int_0^t (X_s^{\pi})^{1-\gamma}\ee(M_s^{\gamma})\ee(V_s^{\gamma})\\
    &\times \bigg(\frac{\gamma}{(1-\gamma)\gamma_0} H_s^{\gamma_0} 
    + \frac{\gamma - \gamma_0}{(1-\gamma)\gamma_0}\lambda_s + \sigma_s \pi_s \bigg)_i\dd W_{s,\, i}\, \nu (\dd \gamma) \\ 
    &+ \sum_{j = 1}^{d_{\W}}\int_{\mathbb{I}} \int_0^t \frac{1}{1-\gamma} (X_s^{\pi})^{1-\gamma}\ee(M_s^{\gamma})\ee(V_s^{\gamma}) J_s^{\gamma,\,i}\dd \W_{s,\,j}\, \nu (\dd \gamma).
\end{align}
From \eqref{eq:fubini.suff}, by invoking the stochastic Fubini Theorem (Theorem 2.2) from \cite{veraar2012stochastic}, we get that the last two terms are local martingales. The first summand is always non-positive, since $D^{\gamma}(\pi_s) \leq 0$ for all $\gamma \in \mathbb{I}$, and becomes $0$ when evaluated at the optimal $\pi^*$ (the non-positivity of the integrand also yields the measurability of the integral by Fubini's theorem). Hence, the function $U_t(x)$ takes values in $\rr \cup \{-\infty\}$. For the purposes of optimal portfolio selection $U_{\cdot}(\cdot)$ is well defined, since there exists $\pi^*$ for which its drift is equal to $0$, and therefore the portfolios yielding the $\{-\infty\}$ value for the drift term, and thereby for $U_{\cdot}(\cdot)$, cannot be optimal and therefore can be ignored. Thus, $(U_{t}(X_t^{\pi}))$ is a local supermartingale for all admissible portfolios and there exists an admissible portfolio for which it is a local martingale. Therefore, $U_{\cdot}(\cdot)$ satisfies conditions (b) and (c) in the definition of forward performance processes. Note that the process $(U_t(x))$ is concave and increasing in $x$, hence satisfying the condition (a) in the definition. Thus, $U_{\cdot}(\cdot)$ is a local FPP.
\end{proof}

The next result provides conditions under which the local power-mixture FPPs are indeed true FPPs.
\begin{theorem}\label{thm:true.fpp}
    Suppose the market model \eqref{eq:stock.dynamics}, a compact set of risk aversion coefficients $\mathbb{I}$, and a fixed risk aversion coefficient $\gamma_0 \in \mathbb{I}$. Let $(M^{\gamma}, V^{\gamma})$  be as in \eqref{eq:local.FPP}, \eqref{eq:local_mtg.gamma} and \eqref{eq:FV.gamma}, and let $\pi \in \A^{v}$. Additionally let $H^{\gamma_0}, J^{\gamma}$ be such that
    \begin{align}\label{eq:true.fpp.integrability.J}
        &\EE\bigg[\int_{\mathbb{I}} \exp \bigg( c_J\int_0^T |J_t^{\gamma}|^2 \dd t \bigg) \nu(\dd \gamma) \bigg]  < \infty, \quad \sup_{t\in [0,T]} \EE\bigg[\int_{\mathbb{I}}\|J_t^{\gamma}\|^{\frac{2uv}{v-1}}\nu(\dd \gamma)\bigg] < \infty,\\\label{eq:true.fpp.integrability.H}
        &\EE \bigg[ \int_{\mathbb{I}} \exp\bigg(  c_H(\gamma)\int_0^T|H_t^{\gamma_0}|^2 \dd t \bigg) \nu(\dd \gamma) \bigg] < \infty, \quad
        \sup_{t\in [0,T]} \EE[\|H_t^{\gamma_0}\|^{\frac{2uv}{v-1}}] < \infty,
    \end{align}
    for all $T>0$, where $c_J$ and $c_H(\gamma)$ are such that
    \[c_J > \frac{qp_2}{2}(qp_1 - 1) ,\quad c_H(\gamma) > \begin{cases}
    uvp_3(1-\gamma)(2uvp_1(1-\gamma) - 1)/\gamma_0^2,\\
    \frac{1}{2}qp_3\gamma\big(q p_1 \gamma - 1  \big)/\gamma_0^2,
    \end{cases}\]
    for some $p_1,p_2,p_3>1$  satisfying $\frac{1}{p_1} + \frac{1}{p_2} + \frac{1}{p_3} < 1$, and $q := \frac{2v}{v-1}$. Then, for initial preferences
    $$U_0(x) = \int_{\mathbb{I}} \frac{x^{1-\gamma}}{1-\gamma} \nu(\dd \gamma),$$
    the process
    \begin{align}\label{eq:true.FPP}
        U_t(x) &= \int_{\mathbb{I}} \frac{x^{1-\gamma}}{1-\gamma} \ee(M_t^{\gamma})\ee(V_t^{\gamma}) \nu (\dd \gamma)
    \end{align}
    is a true FPP, with an associated optimal portfolio given by a solution to
    \begin{align}\label{eq:true.FPP.portfolio}
        \sigma_t \pi_t^* = \frac{1}{\gamma_0}(\lambda_t + H_t^{\gamma_0}).
    \end{align}
\end{theorem}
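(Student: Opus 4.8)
The plan is to upgrade the \emph{local} FPP of Theorem~\ref{thm:local.FPP} to a \emph{true} one by promoting the local martingale part of $U_t(X_t^\pi)$ to a genuine martingale; the integrability hypotheses \eqref{eq:true.fpp.integrability.J}--\eqref{eq:true.fpp.integrability.H} together with $v$-admissibility of $\pi$ are exactly what make this promotion go through. I would start from the semimartingale decomposition already obtained in the proof of Theorem~\ref{thm:local.FPP},
\[
U_t(X_t^\pi) = U_0(x) + A_t + N_t + N_t^\perp,
\]
where $A_t = \int_{\mathbb I}\int_0^t (X_s^\pi)^{1-\gamma}\ee(M_s^\gamma)\ee(V_s^\gamma)D^\gamma(\pi_s)\,\dd s\,\nu(\dd\gamma)$ is non-increasing and non-positive (and vanishes identically at $\pi=\pi^*$, since then $D^\gamma(\pi_s^*)\equiv0$), while $N_t,N_t^\perp$ are the $\dd W$- and $\dd\W$-stochastic integrals appearing there. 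The reduction is then: \emph{if $N,N^\perp$ are true martingales on every $[0,T]$ and $U_t(X_t^\pi)\in L^1$}, then for any $\pi\in\A^v$ one has $\E[U_t(X_t^\pi)\mid\F_s]=U_0+\E[A_t\mid\F_s]+N_s+N_s^\perp\le U_s(X_s^\pi)$, giving condition~(b); and taking $\pi=\pi^*$ kills $A$ and yields the martingale property~(c). Concavity and monotonicity~(a) are inherited verbatim from Theorem~\ref{thm:local.FPP}.

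The heart of the argument is therefore to show $N,N^\perp$ are square-integrable martingales, i.e.\ $\E[\langle N\rangle_T]<\infty$ (and likewise for $N^\perp$); the same estimate bounds $\E[-A_T]$ and hence secures $U_t(X_t^\pi)\in L^1$. Writing $\Phi_s^\gamma=(X_s^\pi)^{1-\gamma}\ee(M_s^\gamma)\ee(V_s^\gamma)\Psi_s^\gamma$ for the ($W$-)integrand of $N^\gamma$, with $\Psi_s^\gamma=\tfrac{\gamma}{(1-\gamma)\gamma_0}H_s^{\gamma_0}+\tfrac{\gamma-\gamma_0}{(1-\gamma)\gamma_0}\lambda_s+\sigma_s\pi_s$, the stochastic Fubini theorem (already invoked for \eqref{eq:fubini.suff}) lets me treat $N_t=\int_{\mathbb I}N_t^\gamma\,\nu(\dd\gamma)$, and Cauchy--Schwarz in the $\nu$-variable (finiteness of $\nu$) reduces the claim to
\[
\E\Big[\int_{\mathbb I}\int_0^T (X_s^\pi)^{2(1-\gamma)}\,\ee(M_s^\gamma)^2\,\ee(V_s^\gamma)^2\,|\Psi_s^\gamma|^2\,\dd s\,\nu(\dd\gamma)\Big]<\infty .
\]
Here I would use the simplification $\lambda_s+H_s^\gamma=\tfrac{\gamma}{\gamma_0}(\lambda_s+H_s^{\gamma_0})$ coming from \eqref{eq:local_mtg.gamma}, so that by \eqref{eq:FV.gamma} one has $\ee(V_s^\gamma)^2=\exp\!\big(-\tfrac{(1-\gamma)\gamma}{\gamma_0^2}\int_0^s|\lambda_r+H_r^{\gamma_0}|^2\,\dd r\big)$, which makes transparent why the sign of $(1-\gamma)$ governs the behavior of this factor.

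I would then bound the integrand by Hölder's inequality with the three exponents $p_1,p_2,p_3$ (whose reciprocals sum to less than $1$, the slack absorbing $|\Psi_s^\gamma|^2$ and the finite measure $\dd s\otimes\nu$), splitting off $(X_s^\pi)^{2(1-\gamma)}$, the exponential factor $(\ee(M_s^\gamma)\ee(V_s^\gamma))^2$, and the polynomial factor $|\Psi_s^\gamma|^2$. The wealth factor raised to power $p_1$ is controlled by the $v$-admissibility bounds \eqref{eq:admiss.integral}--\eqref{eq:admiss.sup} (the cushion $u>1$ providing the margin), and the polynomial factor by the $L^{2uv/(v-1)}=L^{uq}$ moment bounds on $H^{\gamma_0},J^\gamma$ in \eqref{eq:true.fpp.integrability.J}--\eqref{eq:true.fpp.integrability.H} and the admissibility control of $\sigma\pi$. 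For the exponential factor I would use $\ee(M^\gamma)^r=\ee(rM^\gamma)\exp\!\big(\tfrac{r^2-r}{2}\langle M^\gamma\rangle_T\big)$ with $\langle M^\gamma\rangle_T=\int_0^T(|H_t^\gamma|^2+|J_t^\gamma|^2)\,\dd t$ and the fact that $\ee(rM^\gamma)$ is a supermartingale of total mass $\le1$, so that after a further Hölder split the finiteness is driven by the Novikov-type exponential moments $\E[\int_{\mathbb I}\exp(c_J\int_0^T|J_t^\gamma|^2\,\dd t)\,\nu(\dd\gamma)]$ and $\E[\int_{\mathbb I}\exp(c_H(\gamma)\int_0^T|H_t^{\gamma_0}|^2\,\dd t)\,\nu(\dd\gamma)]$. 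The thresholds on $c_J$ and $c_H(\gamma)$ are precisely the values of $\tfrac{r^2-r}{2}$ (and its analogue generated by the $V^\gamma$-exponent) produced by the chosen powers, the two cases of $c_H(\gamma)$ corresponding to $\gamma<1$ versus $\gamma>1$, i.e.\ to whether $\ee(V^\gamma)^2$ is bounded or contributes a growing exponential that must be dominated.

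Running this for every $T>0$ gives the true supermartingale property for all $\pi\in\A^v$ and the true martingale property at $\pi^*$, completing the proof. I expect the main obstacle to be the bookkeeping of this triple Hölder split: one must simultaneously match the exponents $p_1,p_2,p_3$ to the admissibility powers and to the exponential-moment thresholds, keep all bounds uniform enough in $\gamma$ to survive integration against $\nu$, and handle the sign change of the $V^\gamma$-exponent at $\gamma=1$ together with the $\gamma/\gamma_0$ coupling in $H^\gamma$ --- which is exactly what forces the two-case definition of $c_H(\gamma)$.
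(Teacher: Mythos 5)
Your overall strategy is the same as the paper's: decompose $U_t(X_t^\pi)$ into a non-positive drift plus $\dd W$- and $\dd\W$-stochastic integrals, show the latter are square-integrable martingales by bounding their expected quadratic variations through Hölder splits that pair the $v$-admissibility of $\pi$ with the moment and Novikov-type conditions on $H^{\gamma_0}$ and $J^\gamma$ (the paper uses Lemma~\ref{lem:exp.integrability} plus a four-exponent split $p_1,p_2,p_3,p_4$, with the stochastic-exponential factor handled exactly by the supermartingale trick you describe), and then invoke the stochastic Fubini theorem. That part of your plan is sound.

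However, there is a genuine gap: you never verify that $\pi^*$ defined by \eqref{eq:true.FPP.portfolio} is itself admissible, i.e.\ that $\pi^*\in\A^v$ satisfies \eqref{eq:admiss.integral} and \eqref{eq:admiss.sup}. Your argument "taking $\pi=\pi^*$ kills $A$ and yields the martingale property" silently assumes this, but all of your quadratic-variation estimates are conditional on the admissibility bounds for the portfolio being plugged in, so they say nothing about $\pi^*$ until its admissibility is established. The paper devotes the entire last portion of the proof to this: one writes out
\begin{align}
X_T^{\pi^*}=X_0\exp\bigg(\frac{1}{\gamma_0}\int_0^T\big(H_t^{\gamma_0}+(1-\gamma_0)\lambda_t\big)\cdot\dd W_t-\frac{1}{2\gamma_0^2}\int_0^T\big|H_t^{\gamma_0}+(1-\gamma_0)\lambda_t\big|^2\dd t+\int_0^T\lambda_t\cdot\dd W_t+\frac{1}{2}\int_0^T|\lambda_t|^2\dd t\bigg)
\end{align}
and runs a separate Hölder/Novikov argument on $\EE[\int_{\mathbb I}(X_T^{\pi^*})^{2uv(1-\gamma)}\nu(\dd\gamma)]$. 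This step is also where the \emph{first} branch of the condition on $c_H(\gamma)$, namely $uvp_3(1-\gamma)(2uvp_1(1-\gamma)-1)/\gamma_0^2$, is actually consumed — note the quadratic dependence on $(1-\gamma)$ coming from the exponent $2uv(1-\gamma)$ of the optimal wealth. Your reading of the two branches of $c_H(\gamma)$ as corresponding to $\gamma<1$ versus $\gamma>1$ is therefore incorrect: both branches must hold for every $\gamma$, one being generated by the moment bound on $(\ee(M_t^\gamma)\ee(V_t^\gamma))^{2v/(v-1)}$ (the computation you correctly sketch, after rewriting $H^\gamma+(1-\gamma)\lambda=\tfrac{\gamma}{\gamma_0}(H^{\gamma_0}+(1-\gamma_0)\lambda)$) and the other by the admissibility check for $\pi^*$ that your proposal omits. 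Without that check the theorem's condition (c) — and hence the claim that $U$ is a true FPP with optimal portfolio $\pi^*$ — is not proved.
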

The proof of this theorem is long and technical, and hence is presented in Appendix \ref{appendix.a}.
\begin{rmk}
For an individual with risk aversion $\gamma$, that is wealth preferences $U_t^{\gamma}(x)$, the full expression for the optimal utility will be
\begin{align}
    (X_t^{\pi^*})^{1-\gamma} \ee(M_t^{\gamma}) \ee(V_t^{\gamma}) =& \: \exp\bigg( \frac{1}{\gamma_0}(\lambda_s + H_s^{\gamma_0}) \cdot \dd W_s - \frac{1}{2}\int_0^t\bigg|\frac{1}{\gamma_0}(\lambda_s + H_s^{\gamma_0})\bigg|^2 \dd s \\&+ \frac{1}{\gamma_0} \int_0^t \lambda_s^T(\lambda_s + H_s^{\gamma_0}) \dd s \bigg)
\times \ee\bigg( \int_0^t \lambda_s \cdot \dd W_s \bigg) \\
&\times \ee\bigg( \int_0^t J_s^{\gamma }\cdot \dd \W_s\bigg).
\end{align}
This shows that at the optimal portfolio level all the agents will be deriving the same utility only subject to scaling according to the risk aversion coefficient $\gamma$ and market-uncorrelated utility components, that is
\begin{align}
    U_t^{\gamma}(X_t^{\pi^*}) = U_t^{\gamma_0}(X_t^{\pi^*})\times\frac{1-\gamma_0}{1-\gamma}\times\ee\bigg(\int_0^t J_s^{\gamma} \cdot \dd \W_s\bigg)\bigg/\ee\bigg(\int_0^t J_s^{\gamma_0} \cdot \dd \W_s\bigg).
\end{align}
\end{rmk}

From Theorem \ref{thm:true.fpp} we get that if the investors' preferences are shaped by all the information present in the filtration $\mathbb{F}$, then up to invertibility of $\sigma_t$ and some regularity conditions we can always choose a process $H_t$ so that any $\pi \in \A$ is deemed optimal. That is, for any initial condition $U(0, x) = \int_{\mathbb{I}}\frac{x^{1-\gamma}}{1-\gamma} \nu(\dd \gamma)$, and any portfolio $\pi \in \A$ we can explicitly construct a forward performance process, by choosing
\begin{align}\label{eq:H_portfolio}
    H_t^{\gamma_0} = \gamma_0 \sigma_t \pi_t - \lambda_t.
\end{align}
This suggests that any strategy for any person, when viewed through an appropriate lens, can be deemed dynamically consistent. Thus, the self-generation criterion (as defined in \cite{zitkovic2009}) and an initial utility datum is not enough to specify the forward development of one's preferences. We refer the reader to Section \ref{sec:power_fpp} for further discussion on this matter.

Note also that choice of the processes $\{J_t^{\gamma}\}_{\gamma \in \mathbb{I}}$ in no way affects the portfolio selection. Thus, even if we specify the optimal portfolio and the initial preferences, that will not pin down the forward performance process. This is due to assuming that all the information that affects price formation in the market can affect the investors' preferences. In other words, the filtration $\mathbb{F}$ is too large to pin down one FPP for a choice of an optimal portfolio and an initial condition. We can deal with this issue by limiting the flow of information that can affect the development of the investor's preferences. In the following subsection we do so by restricting the investor's information available to the investor to a filtration generated by some factor-driving Brownian motions $B$.
\subsection{Factor-generated Forward Performances}\label{subsec:factor_fpp}
Consider an investor whose preferences are adapted to the factor filtration $\mathbb{G}$ generated by $d_B$-dimensional Brownian motion $B_t$, such that
\begin{align}\label{eq:factor_dynamics}
    \quad B_t = \int_0^t\rho_s^{\top} \dd W_s + \int_0^t A_s^{\top} \dd \W_s,
\end{align}
where $\rho_t, A_t \in \G_t:=\sigma(B_t)$. Let the market model be as in \eqref{eq:stock.dynamics}, with $\mu_t,\sigma_t \in \G_t$. Then, for our power mixture FPPs in \eqref{eq:local.FPP}, $\{M_t^{\gamma}\}_{\gamma \in \mathbb{I}}$ will have to be adapted to $\G_t$, and hence admit the representation
\begin{align}
    M_t^{\gamma} = \int_0^t \tilde{H}^{\gamma}_s\cdot \dd B_s = \int_0^t \big(\rho_s\tilde{H}^{\gamma}_s\big)\cdot \dd W_s + \int_0^t \big(A_s\tilde{H}^{\gamma}_s\big)\cdot \dd \W_s,
\end{align}
for some $\tilde{H}_t \in \G_t$. Thus, when $U_t(x) \in \G_t$, it must be that $\tilde{H}_t^{\gamma} = (\rho_t^{\top}\rho_t)^{-1}\rho_t^{\top} H_t^{\gamma}$, and hence
\begin{align}
    J_t^{\gamma} &= A_t(\rho_t^{\top}\rho_t)^{-1}\rho_t^{\top} H_t^{\gamma}.
\end{align}
Since $$H_t^{\gamma} = \frac{\gamma - \gamma_0}{\gamma_0}\lambda_t + \frac{\gamma}{\gamma_0}H_t^{\gamma_0},$$
we get that $U_t(x)$ is completely parametrized by the process $H_t^{\gamma_0}$. Combining this with \eqref{eq:H_portfolio}, yields that to uniquely identify a factor-generated FPP it is enough to specify the initial condition and the optimal portfolio.
\begin{rmk}
Note that when the eigenvalue equality (EVE) condition (\cite[Definition 2.4]{avanesyan2020construction}) holds with a constant $p$, that is $\rho^{\top}\rho = p I_{d_B}$, we get that
\begin{align}
    \tilde{H}_t^{\gamma} = \frac{1}{r}\rho_t^{\top} H_t^{\gamma}, \quad J_t^{\gamma} = \frac{1}{r}A_t\rho_t^{\top}H_t^{\gamma}.
\end{align}
\end{rmk}
\begin{rmk}
We can certainly improve some of the lower bounds on the constants in Theorem \ref{thm:true.fpp}, given this added structure. We choose to omit these calculations as they would not be contributing anything new to the discussion.
\end{rmk}
\section{Power Forward Performances}\label{sec:power_fpp}
A particular case of our set-up is when the measure $\nu$ is a Dirac measure for some $\gamma \in (0,\infty)/\{1\}.$ Hence, we obtain the characterizations for local and true FPPs as corollaries of Theorems \ref{thm:local.FPP} and \ref{thm:true.fpp}.
\begin{corollary}\label{cor:local.FPP.power}
    Suppose the market model \eqref{eq:stock.dynamics}, and an investor with constant relative risk aversion $\gamma$. Then, for initial preferences $$U_0(x) = \frac{x^{1-\gamma}}{1-\gamma},$$
    the process
    \begin{align}\label{eq:local.FPP.power}
        U_t(x) = \frac{x^{1-\gamma}}{1-\gamma} \ee(M_t) \ee(V_t)
    \end{align}
    is a local FPP, with an associated optimal portfolio given by a solution to
    \begin{align}\label{eq:optimal.portfolio.power}
        \sigma_t\pi_t^* &= \frac{1}{\gamma}(\lambda_t + H_t),
    \end{align}
    where the pair $(M_t, V_t)$ is given by
    \begin{align}\label{eq:local_mtg.gamma.power}
         M_t &= \int_{0}^{t} H_s \cdot \dd W_s + \int_{0}^{t} J_s \cdot \dd \W_s,\quad
    V_t  = -\frac{1-\gamma}{2\gamma} \int_0^t |\lambda_s + H_s|^2  \dd s,
    \end{align}
    and $H_t, J_t \in \F_t$.
\end{corollary}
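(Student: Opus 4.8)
The plan is to derive this as the degenerate, single-atom case of Theorem \ref{thm:local.FPP}. First I would take the weighting measure to be the Dirac point mass $\nu = \delta_{\gamma}$ concentrated at the investor's single coefficient $\gamma$. The compact set of risk aversions then reduces to the singleton $\mathbb{I} = \{\gamma\}$, so the only admissible choice for the reference coefficient in Theorem \ref{thm:local.FPP} is $\gamma_0 = \gamma$. Under these choices the mixture initial datum \eqref{eq:mixture_initial_condition} collapses to $U_0(x) = \frac{x^{1-\gamma}}{1-\gamma}$, which is exactly the power initial condition of the corollary, and the representation \eqref{eq:local.FPP} reduces to a single term $\frac{x^{1-\gamma}}{1-\gamma}\ee(M_t^{\gamma})\ee(V_t^{\gamma})$.

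Next I would carry out the algebraic simplification forced by $\gamma_0 = \gamma$. In \eqref{eq:local_mtg.gamma} the coefficient process degenerates to $H_t^{\gamma} = \frac{\gamma - \gamma_0}{\gamma_0}\lambda_t + \frac{\gamma}{\gamma_0}H_t^{\gamma_0} = H_t^{\gamma_0}$, so writing $H_t := H_t^{\gamma_0}$ and $J_t := J_t^{\gamma}$ the pair $(M_t^{\gamma}, V_t^{\gamma})$ from \eqref{eq:local_mtg.gamma}--\eqref{eq:FV.gamma} becomes precisely the pair $(M_t, V_t)$ displayed in \eqref{eq:local_mtg.gamma.power}, with $V_t = -\frac{1-\gamma}{2\gamma}\int_0^t |\lambda_s + H_s|^2\,\dd s$. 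The optimal allocation \eqref{eq:optimal.portfolio} likewise simplifies to $\sigma_t\pi_t^* = \frac{1}{\gamma_0}(\lambda_t + H_t^{\gamma_0}) = \frac{1}{\gamma}(\lambda_t + H_t)$, matching \eqref{eq:optimal.portfolio.power}, and the single integrand $\ee(M_t)\ee(V_t)$ reproduces \eqref{eq:local.FPP.power}. Thus every assertion of the corollary is obtained by direct substitution into the conclusions of Theorem \ref{thm:local.FPP}.

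Finally, I would note that in this degenerate setting the integrability requirement \eqref{eq:fubini.suff} reduces, after dropping the $\nu$-integral and the outer square root, to the single a.s.\ finiteness of $\int_0^t (X_s^{\pi})^{2(1-\gamma)}(\ee(M_s)\ee(V_s))^2\big(|\tfrac{1}{1-\gamma}H_s + \sigma_s\pi_s|^2 + |\tfrac{1}{1-\gamma}J_s|^2\big)\,\dd s$, which is exactly the local square-integrability needed for the two driving stochastic integrals in $U_t(X_t^{\pi})$ to be genuine local martingales; no appeal to the stochastic Fubini theorem is required since there is no longer an integral over $\mathbb{I}$. Concavity and monotonicity in $x$ are inherited from the power form, so condition (a) is immediate and conditions (b)--(c) follow from the supermartingale/martingale structure established in Theorem \ref{thm:local.FPP}. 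I do not anticipate a substantive obstacle: the entire content already resides in Theorem \ref{thm:local.FPP}, and the only points meriting care are confirming that the point mass forces $\gamma_0 = \gamma$ and that the local-martingale property persists under the stated adaptedness of $H_t, J_t$ together with admissibility of $\pi$.
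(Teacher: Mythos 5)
Your proposal is correct and follows exactly the route the paper intends: the paper offers no separate proof, simply noting that the corollary is the special case of Theorem \ref{thm:local.FPP} with $\nu$ a Dirac mass at $\gamma$, which forces $\gamma_0=\gamma$ and collapses \eqref{eq:local.FPP}--\eqref{eq:fubini.suff} to the single-power statements exactly as you compute. Your additional observation that the stochastic Fubini step becomes vacuous and \eqref{eq:fubini.suff} reduces to the usual local square-integrability of the integrands is a correct and harmless elaboration.
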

The above corollary has been stated and proved in various ways ever since \cite{musiela2008optimal}, and later being fully characterized by \cite{Choulli_2017} (including non-continuous market scenarios). Most recently this characterization was once again obtained by \cite{bo2018forward} and extended to BSDE factor representations. We note that in the latter paper when constructing true FPPs authors rely on uniform integrability assumptions, which preclude them from constructing FPPs with $H_t$ being a constant. We only require a Novikov type condition to hold, which allows us to construct such FPPs. Therefore, on the technical level the following Corollary \ref{cor:true.fpp.power} takes up its own space in the power FPP literature. 
\begin{corollary}\label{cor:true.fpp.power}
   Suppose the market model \eqref{eq:stock.dynamics} and an investor with constant relative risk aversion $\gamma$. Let $(M, V)$  be as in \eqref{eq:local_mtg.gamma.power}, and let $\pi \in \A^{v}$. Additionally let $H, J$ be such that
    \begin{align}\label{eq:true.power.fpp.integrability.J}
        &\EE\bigg[ \exp \bigg( c_J\int_0^T |J_t|^2 \dd t \bigg)  \bigg]  < \infty, &\sup_{t\in [0,T]} \EE\big[\|J_t\|^{\frac{2uv}{v-1}}\big] < \infty,\\\label{eq:true.power.fpp.integrability.H}
        &\EE \bigg[  \exp\bigg(  c_H\int_0^T|H_t|^2 \dd t \bigg)  \bigg] < \infty,
        &\sup_{t\in [0,T]} \EE\big[\|H_t\|^{\frac{2uv}{v-1}}\big] < \infty,
    \end{align}
    for all $T>0$, where $c_J$ and $c_H(\gamma)$ are such that
    \[c_J > \frac{qp_2}{2}(qp_1 - 1) ,\quad c_H > \begin{cases}
    uvp_3(1-\gamma)(2uvp_1(1-\gamma) - 1)/\gamma^2,\\
    \frac{1}{2} qp_3\gamma\big(q p_1 \gamma - 1  \big)/\gamma^2,
    \end{cases}\]
    for some $p_1,p_2,p_3>1$ satisfying $\frac{1}{p_1} + \frac{1}{p_2} + \frac{1}{p_3} < 1$, and $q := \frac{2v}{v-1}$. Then, for initial preferences 
    $$U_0(x) =  \frac{x^{1-\gamma}}{1-\gamma},$$
    the process
    \begin{align}\label{eq:true.power.FPP}
        U_t(x) &=  \frac{x^{1-\gamma}}{1-\gamma} \ee(M_t)\ee(V_t)
    \end{align}
    is a true FPP, with an associated optimal portfolio given by a solution to
    \begin{align}\label{eq:true.FPP.power.portfolio}
        \sigma_t \pi_t^* = \frac{1}{\gamma}(\lambda_t + H_t).
    \end{align}
\end{corollary}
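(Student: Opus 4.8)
The plan is to obtain this corollary as the special case of Theorem \ref{thm:true.fpp} in which the weighting measure $\nu$ is the Dirac mass $\delta_\gamma$ concentrated at the single risk aversion coefficient $\gamma$, with the reference coefficient taken to be $\gamma_0 = \gamma$. First I would set $\mathbb{I} = \{\gamma\}$, which is trivially compact, and choose $\gamma_0 = \gamma \in \mathbb{I}$. With this choice the defining relation $H_t^{\gamma} = \frac{\gamma - \gamma_0}{\gamma_0}\lambda_t + \frac{\gamma}{\gamma_0} H_t^{\gamma_0}$ from \eqref{eq:local_mtg.gamma} collapses: the first term vanishes and the second has coefficient one, so $H_t^{\gamma} = H_t^{\gamma_0} =: H_t$. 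Writing $J_t := J_t^{\gamma}$, the martingale and finite-variation parts $(M_t^{\gamma}, V_t^{\gamma})$ of Theorem \ref{thm:local.FPP} reduce exactly to the pair $(M_t, V_t)$ of \eqref{eq:local_mtg.gamma.power}, the process \eqref{eq:true.FPP} becomes \eqref{eq:true.power.FPP}, and the optimal portfolio \eqref{eq:true.FPP.portfolio} becomes \eqref{eq:true.FPP.power.portfolio}.

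The second step is to confirm that the hypotheses match. Since integration against $\delta_\gamma$ is evaluation at $\gamma$, the integrals over $\mathbb{I}$ in \eqref{eq:true.fpp.integrability.J} and \eqref{eq:true.fpp.integrability.H} disappear, turning the four conditions of the theorem into exactly \eqref{eq:true.power.fpp.integrability.J}--\eqref{eq:true.power.fpp.integrability.H}. In particular, the $\gamma$-dependent constant $c_H(\gamma)$ is evaluated at the single coefficient $\gamma$, and with $\gamma_0 = \gamma$ its two lower bounds become $uvp_3(1-\gamma)(2uvp_1(1-\gamma) - 1)/\gamma^2$ and $\frac{1}{2}qp_3\gamma(qp_1\gamma - 1)/\gamma^2$, which are precisely the bounds stated in the corollary. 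I would also observe that the admissibility requirement $\pi \in \A^{v}$ is inherited directly, since the $v$-admissibility integrals in \eqref{eq:admiss.integral}--\eqref{eq:admiss.sup} already reduce to their single-$\gamma$ forms under $\nu = \delta_\gamma$.

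Having matched both the construction and every hypothesis, I would conclude by invoking Theorem \ref{thm:true.fpp}, which yields that $U_t(x)$ in \eqref{eq:true.power.FPP} is a true FPP with the stated optimal portfolio \eqref{eq:true.FPP.power.portfolio}. Because the argument is a pure specialization, there is no genuine analytic obstacle; the only point requiring care is the bookkeeping that verifies the integrability conditions and the constants collapse correctly, and specifically checking that the reference-coefficient bound appearing in the mixture result coincides with the single-$\gamma$ bound of the corollary once $\gamma_0$ is identified with $\gamma$.
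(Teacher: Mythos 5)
Your proposal is correct and is exactly the route the paper takes: the corollary is obtained by specializing Theorem \ref{thm:true.fpp} to the Dirac measure $\nu=\delta_\gamma$ on the singleton $\mathbb{I}=\{\gamma\}$ with $\gamma_0=\gamma$, under which $H_t^{\gamma}=H_t^{\gamma_0}=H_t$ and all integrability conditions and constants collapse as you describe. Nothing further is needed.
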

The characterization \eqref{eq:true.power.FPP} was first implicitly derived in \cite[Theorem 4]{musiela2010portfolio2}. Without explicitly discussing the power case, the authors characterized power FPPs as a functional transformation of a time-monotone FPP obtained in \cite[Proposition 3]{musiela2010portfolio2}:
\begin{align}\label{eq:FPP_musiela}
    U_t(x) = u(x, A_t)Z_t,
\end{align}
where $A_t$ is a well-chosen finite variation process, and $Z_t = \ee(M_t)$ is the ``market-view'' process. Using the latter as a change of measure we can always consider market dynamics for which the investor's optimal decisions will be determined by a time-monotone FPP. That is, by defining a new measure $\qq$ through a Radon-Nikodym derivative
\begin{align}\label{eq:Q_measure}
    \frac{\dd \qq}{ \dd \pp}\bigg|_t = \ee(M_t), 
\end{align}
we obtain that maximizing our investor's dynamic utility $U_{\cdot}(\cdot)$ under the measure $\pp$ yields the same optimal portfolios as when maximizing the time-monotone FPP $u(x,A_t)$ under the measure $\qq$.

In addition to power forward performance processes obtained in \cite{musiela2010portfolio2}, \cite{musiela2010stochastic}, \cite{nadtochiy2014class}, \cite{nadtochiy2015optimal}, \cite{Choulli_2017}, \cite{bo2018forward}, the broad classes of FPPs derived in \cite{musiela2008optimalasset}, \cite{berrier2009characterization}, \cite{zitkovic2009}, etc. are all situated within the broad class of FPPs that have the form \eqref{eq:FPP_musiela}. Power mixture FPPs do not fall within this class. This will further become self-evident in the following section, where we fully characterize two-power mixture forward performance processes with constant power paramaters.

\subsection{Importance of the market-view process}\label{subsec.market-view}

Let us further investigate the meaning of the market-view process $\ee(M)$. In our construction, $M$ is made up of two components subjective to the investor: $H$ and $J$. Only $H$ enters the optimal portfolio selection explicitly.  As we have noted in the previous section, taking $H$ to be as in \eqref{eq:H_portfolio}, an investor can make any portfolio optimal with respect to a power FPP as in \eqref{eq:true.power.FPP}. That is, for each admissible portfolio there exists a market-view that makes it optimal. This suggests that the space of power FPPs is in fact so large that only knowing that the investor's preferences evolve as a power FPP is not enough even to narrow the search for their optimal portfolio. Any strategy, no matter how bad, has a dynamically consistent forward investment criterion justifying it. To make the forward investment problem well-posed we need to have additional information about the investor.

Thus, to reliably solve  the forward investment problem with CRRA preferences, we propose to first explicitly fix the market-view process $\ee(M)$, and solve the equivalent optimization problem of maximizing the expectation of a time-monotone FPP $x^{1-\gamma}\ee(V)/(1-\gamma)$ under the measure $\qq$ as in \eqref{eq:Q_measure}.  To justify this, let us first change the measure to a measure $\qq^H$ given through a Radon-Nikodym derivative
\begin{align}
    \frac{\dd \qq^H}{\dd \pp}\bigg|_t = \ee\bigg(\int_0^t H_s \cdot \dd W_s\bigg).
\end{align}
Then, the market dynamics will be given by
\begin{align}
    \frac{\dd S_t^i}{S_t^i} = \big\{\sigma_t^T\big((\lambda_t + H_t) \dd t + \dd W_t^H\big)\big\}_i,
\end{align}
and since the optimal portfolio is given by \eqref{eq:true.FPP.power.portfolio}, we get that under the measure $\qq^H$ the investor's optimal strategy is the traditional myopic Merton strategy. Thus, from a portfolio manager's perspective, $H$ could be interpreted as the volatility in investor's preferences due to discrepancy between the investor's and the portfolio manager's beliefs about the observable stock dynamics. Now, if we further change the measure to $\qq$ through the remaining Radon-Nikodym derivative
\begin{align}
    \frac{\dd \qq}{\dd \qq^H} \bigg|_t = \ee\bigg(\int_0^t J_s \cdot \dd \W_s\bigg)
\end{align}
the market dynamics will not be affected in an explicit way, however the distribution of $\lambda_t$ and $\sigma_t$ will change. This is best visible in a multi-factor market model setting.

Consider the eigenvalue equality (EVE) multi-factor Markovian market model as in \cite{avanesyan2020construction} with dynamics
\begin{align}
& \frac{\mathrm{d}S^i_t}{S_t^i}=\mu_i(Y_t)\,\mathrm{d}t+\sum_{j=1}^{d_W} \sigma_{ji}(Y_t)\,\mathrm{d}W^j_t,\quad i=1,\,2,\,\ldots,\,n, \\
& \dd Y_t = \alpha(Y_t) \dd t + \kappa(Y_t)^{\top}\dd B_t, \\
& B_t = \rho^{\top} W_t + A^{\top} W^{\perp}_t,\quad \rho^{\top}\rho = p \mathbf{I}_{d_B},
\end{align}
such that $d_B = d_{\W}$, and let $U_{\cdot}(\cdot) \in \G_t : = \sigma(B_t)$ be a power FPP as given in \eqref{eq:true.power.FPP}. Then, under the new measure $\qq$, the market dynamics will be
\begin{align}
& \frac{\mathrm{d}S^i_t}{S_t^i}=\Big\{\sigma(Y_t)^{\top}\Big(\big(\lambda(Y_t) + H_t\big)\,\mathrm{d}t+ \,\mathrm{d}W^{\qq}_t\Big)\Big\}_i,\quad i=1,\,2,\,\ldots,\,n, \\
& \dd Y_t = \big(\alpha(Y_t) + r^{-1}\kappa(Y_t)^{\top} \rho_t^{\top} H_t\big) \dd t + \kappa(Y_t)^{\top}\dd B_t^{\qq}.
\end{align}
Thus, the problem of forward investing is indeed reduced to maximization of the $\qq$-expectation of a time monotone performance criterion 
\begin{align}\label{eq:monotone_power_fpp}
U_t^{\qq}(x) &= \frac{1}{1-\gamma}\big(xe^{-\frac{1}{2\gamma}\|\lambda(Y_t) + H_t\|^2} \big)^{1-\gamma},
\end{align}
with market dynamics given above. From \eqref{eq:monotone_power_fpp} it follows that, under the measure $\qq$, the investor is just trying to maximize their expected power utility of wealth discounted by their perceived investment opportunities. Changing measure through the market-view process $\ee(M_t)$ gives us the investor's subjective opinion about the stock and factor dynamics (in this case the drift corrections). Thus, when discussing power FPPs that are continuous in time as well as differentiable in wealth parameters, one can always reduce the optimization problem to maximizing the expected value of a time-monotone FPP under an appropriate market-view measure.

\section{Two-Power Mixture Forward Performances}\label{sec:two-power-mixture}
Motivated by the results in \cite{Choulli_2017} we proceed to characterize a class of two-power mixture FPPs. That is we consider forward utilities of the form
\begin{align}\label{eq:fpp_mixture}
	U_t(x) = A_t x^{p_t} + D_t x^{q_t},
\end{align}
where $A, D$, and $0 < p, q < 1$ are continuous stochastic processes adapted to the filtration $\mathbb{F}$. We further assume that the power parameters remain in the order $p_t \leq q_t$ for all $t\geq0$ almost surely.

\subsection{General two-power mixture FPPs}

From the definition of forward performance processes we proceed to establish some necessary conditions the processes $A, D, p, q$ must satisfy. In particular, we obtain that $A, D$ are non-negative, and that $p, q$ are non-decreasing and non-increasing processes respectively. Hence $p, q$ must be of finite variation. 
\begin{lemma}\label{lem.non-negative-coefficients}
If a random field $U_{\cdot}(\cdot)$  given in \eqref{eq:fpp_mixture} is a forward performance process, then for all $t>0$: $A_t, D_t \geq 0$ when $\{p_t < q_t\}$, and $A_t + D_t > 0$ otherwise.
\end{lemma}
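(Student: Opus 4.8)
The plan is to derive everything from condition (a) alone. The supermartingale and martingale properties (b)--(c) govern the \emph{dynamics} of the coefficient processes but place no constraint on the sign of $A_t, D_t$ at a single frozen time; it is the requirement that each fixed-time slice $x \mapsto U_t(x)$ be a genuine (strictly concave, increasing) utility function that does the work. Accordingly, I would fix the probability-one event supplied by condition (a) on which, for \emph{all} $t \geq 0$, the map $x \mapsto A_t x^{p_t} + D_t x^{q_t}$ is strictly concave and increasing on $(0,\infty)$, then fix such an $\omega$ together with a time $t > 0$, and treat $A_t, D_t, p_t, q_t$ as deterministic reals with $0 < p_t \le q_t < 1$. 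The conclusion reduces to an elementary calculus statement about the real function $f(x) := A_t x^{p_t} + D_t x^{q_t}$, and since a single null set serves for every $t$, the pointwise claim holds for all $t > 0$ almost surely.

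On the event $\{p_t < q_t\}$ I would use monotonicity. As $f$ is smooth and increasing on $(0,\infty)$, $f'(x) = A_t p_t x^{p_t - 1} + D_t q_t x^{q_t - 1} \ge 0$ for all $x>0$. Multiplying by $x^{1-p_t} > 0$ gives $A_t p_t + D_t q_t x^{q_t - p_t} \ge 0$; letting $x \to 0^+$ and using $q_t - p_t > 0$ kills the second term and leaves $A_t p_t \ge 0$, hence $A_t \ge 0$. Symmetrically, multiplying $f'(x)$ by $x^{1-q_t}$ and letting $x \to \infty$ yields $D_t q_t \ge 0$, hence $D_t \ge 0$. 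This is the crux: the behaviour of $f$ near $0$ is dictated by the smaller exponent and near $\infty$ by the larger one, so the two one-sided limits decouple the two sign conditions cleanly.

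On the complementary event $\{p_t = q_t\}$ the powers coincide and $U_t(x) = (A_t + D_t)x^{p_t}$. Then $f''(x) = (A_t + D_t)\,p_t(p_t - 1)\,x^{p_t - 2}$, and since $p_t(p_t-1) < 0$ and $x^{p_t-2} > 0$, strict concavity ($f'' < 0$) is possible only if $A_t + D_t > 0$ (equivalently, strict monotonicity already forces $(A_t + D_t)p_t x^{p_t-1} > 0$). Combining the two cases yields the statement.

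The argument is in essence a deterministic calculus fact, so I do not anticipate a genuine obstacle. The only points requiring care are the bookkeeping of a single exceptional null set valid for all $t$ (which condition (a) supplies directly), and being explicit that monotonicity handles the strict-order case while strict concavity is precisely what delivers the \emph{strict} inequality $A_t + D_t > 0$ in the tie case $p_t = q_t$.
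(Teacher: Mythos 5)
Your proposal is correct and follows essentially the same route as the paper: both reduce the claim to the sign of the $x$-derivative of the fixed-time slice $x\mapsto A_t x^{p_t}+D_t x^{q_t}$, rescale by a power of $x$, and take the limits $x\to 0^+$ and $x\to\infty$ to decouple $A_t\geq 0$ from $D_t\geq 0$ in the case $p_t<q_t$, handling the tie case $p_t=q_t$ by strict monotonicity (the paper) or equivalently strict concavity (your variant). The only cosmetic difference is that the paper works directly with the strict inequality $p_tA_tx^{p_t-1}+q_tD_tx^{q_t-1}>0$ rather than $f'\geq 0$, which changes nothing in the limits.
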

\begin{proof}
Let us fix a time $t \geq 0$. Since $U_{\cdot}(\cdot)$ is an FPP, then $U_t(x)$ must be strictly increasing in $x$. We are considering twice-differentiable functions $U_t(x)$ in $x$, hence the above statement reduces to $U_t(x)$ having a strictly positive first derivative in $x$. That is
\begin{align}
    & p_t A_t x^{p_t - 1} + q_tD_tx^{q_t - 1} > 0.
\end{align}
Since $q_t \geq p_t$, for $x>0$ we get
\begin{align}
    & p_t A_t + q_tD_tx^{q_t - p_t} > 0.
\end{align}
If $\{p_t = q_t\}$ holds, then the above inequality is equivalent to $A_t + D_t  > 0$. That is, for these realizations of the sample space, for a fixed time $t$, our investor has CRRA preferences. Now for $\{p_t < q_t\}$, then taking $x$ close to 0, and $\infty$ respectively gives us that $A_t \geq 0$, and $D_t\geq 0$. Additionally, we obtain that for no realization of the sample space  can it happen that $A_t = D_t= 0$. 
\end{proof}
\begin{rmk}
For a general discrete power mixture FPP of the form
\begin{align}
    U_t(x) = \sum_{i=1}^m A_t^{i} x^{p^i_t},
\end{align}
such that $p^1 \leq \ldots \leq p^m$ for some $m>2$, using the same approach as in the proof of Lemma \ref{lem.non-negative-coefficients}, only yields that the first and the last coefficients are non-negative $A_t^{1}, A_t^{m} \geq 0$. That is, if we consider larger discrete power mixtures than two-power mixtures, we can obtain non-negativity only for the leading risk-aversion coefficients $p^1, p^m$. This is further expanded on in Section \ref{sec:three-power}.
\end{rmk}
Now, let us show that $p$ and $q$ can only be of finite variation.
\begin{lemma}\label{lem:powers_monotonic}
Let $0 < p < q < 1$ be continuous processes and $U_t(x)$ be as in \eqref{eq:fpp_mixture}. Let $A,\, D$ be as before and such that for all $T>0$
\begin{align}\label{eq:coefficient_integrability}
    \sup_{t \in [0,T]} \E\big[ A_t \big] < \infty, \quad \sup_{t \in [0,T]} \E\big[ D_t \big] < \infty, 
\end{align}
and $U_{\cdot}(\cdot)$ is an FPP. Then, the processes $p, q$ are $\pp$-almost surely non-decreasing and non-increasing respectively. That is, for all $0<s<t$
$$p_0 \leq p_s \leq p_t < q_t \leq q_s \leq q_0, \quad \pp-a.s.$$
\end{lemma}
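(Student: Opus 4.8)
The plan is to exploit the supermartingale requirement in condition (b) of the FPP definition at the single simplest admissible strategy, namely $\pi \equiv 0$. For this choice the wealth is frozen, $X_t^0 = x$, so condition (b) forces, for each fixed $x>0$, the process $t\mapsto U_t(x) = A_t x^{p_t} + D_t x^{q_t}$ to be a (local) supermartingale. Because $A_t, D_t \geq 0$ by Lemma \ref{lem.non-negative-coefficients} this process is nonnegative, and the elementary growth bound $x^{p_t}, x^{q_t} \leq \max(1,x)$ together with \eqref{eq:coefficient_integrability} gives $\E[U_t(x)] < \infty$; hence it is a genuine supermartingale and
$$\E\big[A_t x^{p_t} + D_t x^{q_t} \mid \F_s\big] \leq A_s x^{p_s} + D_s x^{q_s}, \qquad 0 < s < t.$$
Here I would first fix a countable collection of parameters — rational times $s<t$, a sequence $x_n \to \infty$, and a sequence $x_m \to 0$ — so that all these inequalities hold simultaneously off a single $\pp$-null set.

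The core of the argument is an asymptotic extraction. To control $q$, divide the displayed inequality by $x^{q_s}$ and send $x \to \infty$ along $x_n$. Since $p_s < q_s$, the right-hand side tends to $D_s$, so $\limsup_n \E[\,\cdot\,\mid\F_s] \leq D_s < \infty$. On the other hand the integrand $A_t x^{p_t-q_s} + D_t x^{q_t-q_s}$ is nonnegative, so conditional Fatou yields $\E[\liminf_n(\cdots)\mid\F_s]\le D_s$. On the event $\{q_t > q_s\}\cap\{D_t>0\}$ the term $D_t x^{q_t - q_s}$ diverges, forcing this conditional expectation to be infinite unless the event is $\pp$-null; hence $q_t \leq q_s$ almost surely on $\{D_t>0\}$. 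The symmetric computation — dividing by $x^{p_s}$ and sending $x\to 0$ along $x_m$, where now the \emph{smallest} power dominates near the origin and the right-hand side tends to $A_s$ — gives $p_t \geq p_s$ almost surely on $\{A_t>0\}$.

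Finally I would patch these countably many statements into a pathwise assertion: off a null set we have $q_t \le q_s$ and $p_s \le p_t$ for all rational $s<t$, and continuity of $p$ and $q$ promotes this to all real $0 < s \leq t$, giving $p$ non-decreasing, $q$ non-increasing, and therefore of finite variation. I expect the main obstacle to be the \emph{degeneracy of the coefficients}: the extraction above only constrains $q$ where $D_t > 0$ and $p$ where $A_t > 0$, and when one coefficient vanishes the corresponding exponent simply does not enter $U_t$, so it is left unconstrained by the supermartingale property. Handling this cleanly uses Lemma \ref{lem.non-negative-coefficients} (which guarantees $A_t + D_t > 0$, so at least one power is always genuine) together with the continuity of $A$ and $D$ to control the exceptional times; this bookkeeping, rather than the limit computation itself, is where the real care is needed.
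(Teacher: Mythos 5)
Your proof is correct, and its skeleton --- test the supermartingale inequality at the null portfolio, then send $x\to\infty$ to control $q$ and $x\to 0$ to control $p$ --- is exactly the paper's. The difference lies in the device used to extract monotonicity from the limit. The paper drops the $A_t x^{p_t}$ term, passes to the inequality $\E[D_t x^{q_t}\mid\F_s]\le (A_s+D_s)x^{q_s}$, changes to an auxiliary measure $\qq_1$ with density proportional to $D_t/(A_s+D_s)$, splits $x^{q_t-q_s}$ according to the sign of the exponent, and applies Jensen's inequality to conclude $\E^{\qq_1}[(q_t-q_s)_+\mid\F_s]=0$; you instead divide by $x^{q_s}$ and apply conditional Fatou to the nonnegative integrand, concluding that $\{q_t>q_s\}\cap\{D_t>0\}$ is $\pp$-null. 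These are equivalent in substance --- the paper's conclusion is likewise only a statement on $\{D_t>0\}$, since $\qq_1\sim\pp$ tacitly requires $D_t>0$ --- but your route is more elementary (no auxiliary measures, no Jensen) and has the virtue of making explicit the degeneracy that the paper glosses over: where a coefficient vanishes, the corresponding exponent is invisible to $U_t$ and cannot be constrained by the supermartingale property. Your countable-dense-times-plus-continuity patching is the right bookkeeping and is also implicit in the paper (which fixes $s<t$ and concludes ``for all $0<s<t$''). The one step you should still write out is the treatment of the set $\{D_t=0\}$ (respectively $\{A_t=0\}$): Lemma \ref{lem.non-negative-coefficients} gives only $A_t+D_t>0$, so this set need not be empty, and the asserted inequality there genuinely requires an additional argument via continuity of $A$, $D$, $p$, $q$ --- a point on which your sketch is honest but incomplete, and on which the paper's own proof is silent.
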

\begin{proof}
Take any $0<s<t$. Note that the null-portfolio, $\pi^0 := 0$, is an admissible portfolio. Thus, $(U_t(x))_{t\geq 0}$ is a supermartingale and we get 
\begin{align}\label{eq:null_portfolio}
    \E\big[ A_t x^{p_t} + D_t x^{q_t} \big| \F_s \big] \leq A_s x^{p_s} + D_s x^{q_s}.
\end{align}
First, let us consider the case when $x>1$. Then $x^{q_s} > x^{p_s}$, and since $A_s, D_s \geq 0$, \eqref{eq:null_portfolio} yields 
\begin{align}
    \E\big[ D_t x^{q_t} \big| \F_s \big] \leq \big(A_s + D_s\big)x^{q_s}.
\end{align}
Let us define an equivalent measure $\qq_1 \sim \pp$, with a Radon-Nikodym derivative
$$ \frac{\dd \qq_1}{\dd \pp} = \frac{D_t/(A_s + D_s)}{\E[D_t/(A_s + D_s)]}. $$
Thus, we obtain
$$\E^{\qq_1}\big[ x^{q_t - q_s} - 1 \big| \F_s \big] \leq C_{\qq_1} := \E\big[ D_t\big| \F_s \big]^{-1} - 1 $$
Representing $x^{q_t - q_s}$ as $e^{\log(x) (q_t - q_s)_+} + e^{\log(x) (q_t - q_s)_-} - 1 $ we get
$$ \E^{\qq_1}\big[ e^{\log(x) (q_t - q_s)_+} + e^{ - \log(x) (q_t - q_s)_-} \big| \F_s \big] \leq C_{\qq_1} + 2.$$
Since $e^{\cdot}$ is a convex function, using Jensen's inequality yields
$$ e^{\log(x) \E^{\qq_1}[(q_t - q_s)_+|\F_s]} + e^{-\log(x) \E^{\qq_1}[(q_t - q_s)_-|\F_s]} \leq C_{\qq_1} + 2.$$
Letting $x$ go to infinity we obtain that $\E[(q_t - q_s)_+ | \F_s] = 0$. Thus for all $0<s<t$ we get that $\pp-a.s.$ $q_t \leq q_s$.

\smallskip 

Now, let us consider the case $x<1$. Here, $x^{p_s}> x^{q_s}$, and just like above we get
\begin{align}
    \E\big[ A_t x^{p_t} \big| \F_s \big] \leq \big(A_s + D_s\big)x^{p_s}.
\end{align}
Defining a new measure $\qq_2 \sim \pp$
$$ \frac{\dd \qq_2}{\dd \pp} = \frac{A_t/(A_s + D_s)}{\E[A_t/(A_s + D_s)]}, $$
and proceeding as previously, we obtain that for all $x<1$
$$ e^{\log(x) \E^{\qq_2}[(p_t - p_s)_+|\F_s]} + e^{-\log(x) \E^{\qq_2}[(p_t - p_s)_-|\F_s]} \leq C_{\qq_2} + 2.$$
Letting $x$ approach $0$ we get that $\E[(p_t - p_s)_- | \F_s] = 0$, and thus for all $0<s<t$, $p_t \geq p_s,$  $\pp-$almost surely. 
\end{proof}
\begin{rmk}
By setting $x=1$ in \eqref{eq:null_portfolio}, we obtain that $(A_t + D_t)$ has to be a supermartingale.
\end{rmk}
One question that arises is, when are the finite variation processes $p$ and $q$ constant? In the proposition below we obtain that if the smaller one of the processes is constant, then the larger one has to be constant as well.
\begin{proposition}\label{prop:one_power_constant}
Let $0<p<q<1$ be continuous processes and $A_t,D_t > 0$ be continuous semimartingales such that \eqref{eq:coefficient_integrability} holds, and let
$U_{\cdot}(\cdot)$, as given in \eqref{eq:fpp_mixture}, be an FPP. Then, $q_t = q_0,$ $\pp-a.s.$ for all $t>0$ if $p_t = p_0,$ $\pp-a.s.$ for all $t>0$.
\end{proposition}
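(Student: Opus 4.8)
The plan is to exploit the stochastic PDE characterization of local FPPs in \eqref{eq.FPP.SPDE}: since any FPP is in particular a local FPP, the random field $U_t(x) = A_t x^{p_0} + D_t x^{q_t}$ of \eqref{eq:fpp_mixture} must satisfy \eqref{eq.FPP.SPDE}. The decisive observation is that, with the smaller exponent frozen at $p_0$, the only mechanism that can inject a $\log x$ dependence into the dynamics of $U_t(x)$ is the variation of the exponent $q_t$, whereas the drift prescribed by the right-hand side of \eqref{eq.FPP.SPDE} is a ratio of pure powers of $x$ and therefore contains no logarithm. Matching the $\log x$ contributions will force $q$ to have zero variation.

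Concretely, I would first write the coefficient processes as continuous semimartingales, $dA_t = \alpha_t\,dt + \xi_t\cdot dW_t + \cdots$ and $dD_t = \beta_t\,dt + \eta_t\cdot dW_t + \cdots$, and record that, because $q$ is of finite variation by Lemma \ref{lem:powers_monotonic}, the chain rule gives $d(x^{q_t}) = x^{q_t}\log x\,dq_t$ with no quadratic-variation correction and no bracket against $A$ or $D$. Applying It\^{o}'s formula to $U_t(x)$ for fixed $x$ then yields
\[
dU_t(x) = x^{p_0}\,dA_t + x^{q_t}\,dD_t + D_t\,x^{q_t}\log x\,dq_t ,
\]
so that the martingale volatility is $a^W_t(x) = x^{p_0}\xi_t + x^{q_t}\eta_t$ and the finite-variation (drift) part carries the single logarithmic term $D_t\,x^{q_t}\log x\,dq_t$.

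Next I would substitute $\partial_x U_t$, $\partial^2_{xx}U_t$ and $\partial_x a^W_t$ into \eqref{eq.FPP.SPDE} and clear the denominator $\partial^2_{xx}U_t = p_0(p_0-1)A_t x^{p_0-2} + q_t(q_t-1)D_t x^{q_t-2}$. Up to a common power of $x$, the resulting identity, valid for every $x>0$, is an equality between linear combinations of the five functions $x^{2p_0},\, x^{p_0+q_t},\, x^{2q_t},\, x^{p_0+q_t}\log x$ and $x^{2q_t}\log x$, where only the last two carry the variation $dq_t$ and the right-hand side of \eqref{eq.FPP.SPDE} contributes none of the logarithmic terms. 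Because $p_0 < q_t$, these five functions are linearly independent on $(0,\infty)$ (substitute $x=e^y$), so matching the coefficient of $x^{2q_t}\log x$ gives $D_t^2\,q_t(q_t-1)\,dq_t = 0$. As $D_t>0$ and $q_t\in(0,1)$, this forces $dq_t = 0$, i.e. $q_t = q_0$ almost surely for all $t$.

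The main obstacle is making the ``no logarithm on the right-hand side'' matching rigorous at the level of measures rather than densities: a priori $dq$, and the finite-variation parts of $A$ and $D$, need not be absolutely continuous in $t$, so the argument should be phrased as an identity of signed random measures on $[0,T]$ for each fixed $x$, with the linear-independence-in-$x$ step used to conclude that the coefficient measure $D_t^2 q_t(q_t-1)\,dq_t$ vanishes (the same independence incidentally kills any singular parts of $dA,dD$). Two further technical points to dispatch are the application of \eqref{eq.FPP.SPDE} in the possibly incomplete market, where the Moore--Penrose term $\sigma_t^{\top}(\sigma_t^{\top})^{-1}$ only projects $\partial_x a^W_t(x)$ and hence merely replaces $\xi_t,\eta_t$ by their projections without creating any new $x$-dependence, and the verification that $U_t(x)$ is a genuine It\^{o} field in $t$ so that the SPDE characterization is legitimately available; the same drift identity can alternatively be obtained by the verification route of Theorem \ref{thm:local.FPP}, requiring the drift of $U_t(X_t^{\pi})$ to have a vanishing maximum over $\pi$, should one prefer to avoid invoking \eqref{eq.FPP.SPDE} directly.
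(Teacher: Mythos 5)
Your proposal is correct in substance, but it takes a genuinely different route from the paper's. The paper's proof uses only the supermartingale property of $U_t(X_t^{\pi^0})$ along the null portfolio: after It\^{o}'s formula it demands that the finite-variation part be non-increasing, feeds in the monotonicity $\dd q_t\leq 0$ from Lemma \ref{lem:powers_monotonic} so that the term $\log(x)\int_s^t D_r x^{q_r}\,\dd q_r$ is non-negative for $x<1$, and then sends $x\to 0$ to force $\dd q\equiv 0$ --- an inequality-plus-asymptotics argument in the wealth variable. You instead impose the exact drift identity of a local FPP (equivalently, that the maximum over $\pi$ of the drift of $U_t(X_t^{\pi})$ vanishes, exactly as the paper itself does for Proposition \ref{prop:mixture_components} in Appendix B) and match coefficients of the linearly independent functions $x^{2p_0}$, $x^{p_0+q_t}$, $x^{2q_t}$, $x^{p_0+q_t}\log x$, $x^{2q_t}\log x$, isolating $q_t(q_t-1)D_t^2\,\dd q_t=0$; your It\^{o} decomposition, the form of $\partial_x a^W$, and the absence of logarithms on the right-hand side of \eqref{eq.FPP.SPDE} are all correct, and the measure-versus-density issue you flag is real but standard (dominate all coefficient measures by $|\dd A^{FV}|+|\dd D^{FV}|+|\dd q|+\dd t$, pass to Radon--Nikodym densities, and apply linear independence at a.e.\ $t$ over a countable dense set of $x$). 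What your route buys: it avoids the $x\to 0$ limit entirely (a step that, as written in the paper, needs care, since $\log(x)x^{q_0}$ in fact tends to $0$ rather than $-\infty$ and must be weighed against the competing $x^{p_0}$ term), and it actually proves more --- retaining the $x^{2p_t}\log x$ coefficient gives $\dd p_t=0$ simultaneously, so the hypothesis $p_t=p_0$ would not even be needed. What it costs: you invoke condition (c) (the existence of an optimizer turning the drift inequality into an identity), whereas the paper's argument runs on condition (b) alone.
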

\begin{proof}
Since $A_t, D_t$ are strictly positive semimartingales, then there exist $\mathbb{F}$-adapted processes $a, a^{\perp}, d, d^{\perp}, \alpha, \delta$ such that
\begin{align}
    \dd A_t & = \alpha_t A_t \dd t + a_t A_t \cdot \dd W_t + a_t^{\perp}A_t \cdot \dd \W_t,\\
    \dd D_t & = \delta_t D_t \dd t + d_t D_t \cdot \dd W_t + d_t^{\perp}D_t \cdot \dd \W_t.
\end{align}
Let us again consider the null-portfolio $\pi^0 :=0$. As previously, we know that $U_t^{\pi^0}(x) = A_t x^{p_t} + D_t x^{q_t}$ is a supermartingale. Applying It\^{o}'s formula we get
\begin{align}
    \dd U_t^{\pi^0}(x) =&\: \log(x)\big( A_t x^{p_t} \dd p_t  +  D_t x^{q_t} \dd q_t\big) +  \big( \alpha_t A_t x^{p_t} + \delta_t D_t x^{q_t} \big) \dd t\\
    &+ \big( a_tA_tx^{p_t} + d_t D_t x^{q_t}   \big) \cdot \dd W_t + \big( a_t^{\perp} A_tx^{p_t} + d_t^{\perp} D_t x^{q_t}\big) \cdot \dd \W_t.
\end{align}

For $U_t^{\pi^0}$ to be a supermartingale it is necessary that the finite variation term is non-increasing in time. Now, let us assume that $p_t = p_0$, $\pp-a.s.$, then $\dd p_t = 0$. Thus, the above-mentioned necessary condition is equivalent to 
\begin{align}
    \log(x) \int_s^t D_r x^{q_r} \dd q_r + \int_s^t \alpha_r A_r x^{p_0} + \delta_r D_r x^{q_r} \dd r \leq 0, \quad \forall 0<s<t.
\end{align}
From Lemma \ref{lem:powers_monotonic} we get that $\dd q_r < 0$, $q_r < q_0$, and thus for $x<1$ a further necessary condition would be
\begin{align}
    \log(x) x^{q_0}\int_0^t D_r \dd q_r + \int_0^t \alpha_r A_r x^{p_0} + \delta_r D_r x^{q_r} \dd r \leq 0, \quad \forall t>0.
\end{align}
Note that since $0<q_0<1$, applying L'H\^{o}pital's rule, we get that $\log(x) x^{q_0}$ tends to $-\infty$ as $x$ goes to 0. Thus, taking $x$ to 0, we obtain that the above expression can assume positive values, unless $q_r = q_0$, $\pp-a.s.$ for all $r \in [0,t]$.
\end{proof}
This shows that if the person is confident about their relative risk aversion when they are very poor, then, in order to be a consistent investor, they have to be sure about their relative risk aversion when they are extremely rich as well. Alternatively, if the more risk averse of the two investing partners is sure of their risk aversion coefficient, then to construct a dynamically consistent investment criterion for a joint investment vehicle, the less risk averse investor must be sure of their risk aversion coefficient also.

\subsection{Constant power two-power mixture FPPs}

Having examined the case of random powers in two-power mixture FPPs we now choose to consider a constant power scenario and characterize the processes $A$ and $D$. We will focus on the case when $A$ and $D$ are strictly positive continuous semimartingales.
\begin{proposition}\label{prop:mixture_components}
Let $A, D > 0$ be continuous semimartingales adapted to $\mathbb{F}$, and let $p,q$ be constants such that $0<p<q<1$. If the process
\begin{align}\label{eq:fpp_twomixture_constant}
    U_t(x) = A_t x^p + D_t x^q 
\end{align}
is an FPP, then $A_t x^p$ and $D_t x^q$ are FPPs as well, and the optimal portfolios corresponding to all three FPPs solve the same linear system.
\end{proposition}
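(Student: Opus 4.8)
The plan is to run the verification argument of Theorem~\ref{thm:local.FPP} in reverse. Since $A,D>0$ are continuous semimartingales, I first write them as in the proof of Proposition~\ref{prop:one_power_constant},
\begin{align}
\dd A_t &= \alpha_t A_t\,\dd t + a_t A_t\cdot\dd W_t + a_t^{\perp}A_t\cdot\dd\W_t,\\
\dd D_t &= \delta_t D_t\,\dd t + d_t D_t\cdot\dd W_t + d_t^{\perp}D_t\cdot\dd\W_t,
\end{align}
set $\theta_t:=\sigma_t\pi_t$, and apply It\^{o}'s formula to $A_t(X_t^{\pi})^p$ and $D_t(X_t^{\pi})^q$. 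Collecting the finite-variation parts, the drift of $U_t(X_t^{\pi})$ equals
\[
A_t(X_t^{\pi})^p\,\mathcal D^p(\theta_t) + D_t(X_t^{\pi})^q\,\mathcal D^q(\theta_t),
\]
where $\mathcal D^p(\theta) = \alpha_t + p\,\theta^{\top}(\lambda_t+a_t) + \tfrac12 p(p-1)|\theta|^2$ and $\mathcal D^q(\theta) = \delta_t + q\,\theta^{\top}(\lambda_t+d_t) + \tfrac12 q(q-1)|\theta|^2$ are strictly concave quadratics in $\theta$, because $0<p,q<1$.

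The key step is to decouple the two powers. For a prescribed adapted control $\theta$, the associated constant-proportion portfolio gives wealth $X_t^{\pi}=x\,\Xi_t$ with $\Xi_t$ independent of the initial capital $x$, so at fixed $(t,\omega)$ the value $X_t^{\pi}$ sweeps all of $(0,\infty)$ as $x$ varies while $\theta_t$ is held fixed. The supermartingale requirement (b) then forces
\[
A_t\,y^p\,\mathcal D^p(\theta_t) + D_t\,y^q\,\mathcal D^q(\theta_t)\le 0\qquad\text{for all }y>0,
\]
and dividing by $y^p$ and letting $y\to0$ and $y\to\infty$ (using $q>p$ and $A_t,D_t>0$) yields $\mathcal D^p(\theta_t)\le0$ and $\mathcal D^q(\theta_t)\le0$ separately; ranging over controls gives $\mathcal D^p\le0$ and $\mathcal D^q\le0$ as functions of $\theta$, $\dd t\otimes\dd\PP$-a.e. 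Now take the optimizer $\pi^{*}$ of condition (c): since $U_t(X_t^{\pi^{*}})$ is a true martingale its drift $A_t(X_t^{\pi^{*}})^p\mathcal D^p(\theta_t^{*}) + D_t(X_t^{\pi^{*}})^q\mathcal D^q(\theta_t^{*})$ vanishes, and being a sum of two non-positive terms each must vanish, so $\mathcal D^p(\theta_t^{*})=\mathcal D^q(\theta_t^{*})=0$. Hence $\theta_t^{*}$ maximizes both concave quadratics and satisfies the first-order conditions $(1-p)\theta_t^{*}=\lambda_t+a_t$ and $(1-q)\theta_t^{*}=\lambda_t+d_t$; for each of $A_t x^p$, $D_t x^q$ and $U_t$ the optimal portfolio is therefore determined by the same linear system $\sigma_t\pi_t^{*}=\theta_t^{*}$.

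It remains to upgrade these pointwise conditions to the full FPP property of each summand. Condition (a) is immediate, since $A_t,D_t>0$ and $0<p,q<1$ make $A_t x^p$ and $D_t x^q$ strictly increasing and strictly concave in $x$. For (b), the bounds $\mathcal D^p\le0$, $\mathcal D^q\le0$ make $A_t(X_t^{\pi})^p$ and $D_t(X_t^{\pi})^q$ non-negative local supermartingales, hence true supermartingales by Fatou's lemma. For (c), at $\pi^{*}$ both are non-negative local martingales and therefore supermartingales, while by hypothesis their sum $U_t(X_t^{\pi^{*}})$ is a true martingale; since two supermartingale inequalities can add up to a martingale equality only if each is itself an equality, both $A_t(X_t^{\pi^{*}})^p$ and $D_t(X_t^{\pi^{*}})^q$ are martingales. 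Thus $A_t x^p$ and $D_t x^q$ are FPPs, sharing with $U_t$ the optimal portfolio $\sigma_t\pi_t^{*}=\theta_t^{*}$.

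The main obstacle is the decoupling step: one must legitimately treat the current wealth as a free variable so that the $x^p$ and $x^q$ contributions can be separated, which is exactly what the scaling-by-initial-capital construction together with the linear independence of the monomials accomplishes. Once $\mathcal D^p$ and $\mathcal D^q$ are isolated, the remaining arguments are routine.
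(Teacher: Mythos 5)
Your proof is correct and reaches the same conclusion, but the key step is genuinely different from the paper's. The paper treats the drift of $U_t(X_t^{\pi})$ as a single strictly concave function of $\sigma_t\pi_t$, writes down its maximizer explicitly as the mixture-weighted portfolio \eqref{eq:mixture_portfolio}, substitutes it back into the zero-drift identity, clears denominators, and matches the coefficients of $(X_t^{\pi^*})^{2p}$, $(X_t^{\pi^*})^{p+q}$ and $(X_t^{\pi^*})^{2q}$ to extract \eqref{eq:two-power_fpp_FV} and the alignment condition $\frac{1}{1-p}(\lambda_t+a_t)=\frac{1}{1-q}(\lambda_t+d_t)$; it then concludes by observing that these conditions match the two-atom case of Theorem \ref{thm:local.FPP}. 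You instead decouple the two drift terms \emph{before} optimizing, using the fact that the wealth process scales linearly in the initial capital so that $X_t^{\pi}$ sweeps $(0,\infty)$ at fixed control, which forces $\mathcal D^p\le 0$ and $\mathcal D^q\le 0$ separately and hence forces $\theta^*$ to be the common unconstrained maximizer of both quadratics. This buys two things: it makes explicit the wealth-scaling/linear-independence justification that the paper's coefficient-matching step uses only implicitly (and uses it more cleanly, since you hold the control fixed while varying $x$, whereas the paper's $\pi^*$ itself depends on the wealth), and your final paragraph upgrades the local super/martingale statements to true ones directly (Fatou for non-negative local supermartingales, and the observation that two supermartingales summing to a martingale must each be martingales), which the paper handles only by appeal to Theorem \ref{thm:local.FPP}. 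What the paper's route buys in exchange is the explicit joint optimal portfolio formula \eqref{eq:mixture_portfolio}, which is reused later in Section \ref{subsec:non-fpp}. Both arguments share the same (standard, unremarked) gaps: the assumption that the finite-variation parts of $A$ and $D$ are absolutely continuous, and the passage from ``drift $\le 0$ for each admissible $\pi$'' to ``$\mathcal D^p(\theta)\le 0$ for all $\theta$ simultaneously,'' which needs a countable dense family of controls.
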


Please see the proof in Appendix \ref{appendix.b}. Proposition \ref{prop:mixture_components} combined with Theorems \ref{thm:local.FPP} and \ref{thm:true.fpp} provides a complete characterization of two-power mixture forward performance processes whose coefficients are strictly positive semimartingales and powers are constants in the interval (0,1).

The result can be interpreted from the lens of investment pools. Imagine we have two investors with different risk aversions, and whose preferences develop as power forward performance processes. Now, imagine that neither of them has the ability to invest in the market on the individual basis, but there is a way to invest through a joint investment vehicle. Thus, the two-power mixture FPPs are characterizing the joint utility the investors derive from their invested capital's performance. Proposition \ref{prop:mixture_components} shows that the investors will be just as happy investing together as if they had the opportunity to invest apart only if their choices would have been the same anyways. From the proof we can see that the drift of $U_{\cdot}(\cdot)$, as defined in \eqref{eq:fpp_twomixture_constant}, will always be non-positive as long as the components are FPPs themselves
\begin{align}
    -\frac{pq(1-p)(1-q)A_tD_t(X_t^{\pi^*})^{p+q}}{p(1-p)A_t (X_t^{\pi^*})^p + q(1-q)D_t (X_t^{\pi^*})^q}\bigg|\frac{1}{1-p}(\lambda_t + a_t) - \frac{1}{1-q}(\lambda_t + d_t) \bigg|^2
     \leq 0.
\end{align}
That is, there is no cost to pooling resources together only if the investors have identical individual optimal strategies. 

\subsection{Non-FPP forward strategies}\label{subsec:non-fpp}
One of the defining properties that makes forward performance processes useful is that they guarantee existence of a dominating strategy by making the utility random field a martingale at such a strategy and a supermartingale for all other admissible strategies. As we have observed in Proposition \ref{prop:mixture_components}, when dealing with pooled resources, in most of the cases the joint utility random field cannot be a martingale. Does this mean that there is no dominating strategy? That is, if two investors' ideal strategies do not perfectly align, should they always pass up on the investment opportunity? Or maybe there is a strategy that will not keep the joint utility at a prior level, however could be close to optimal. We believe that in some cases $\pi^*$ given by \eqref{eq:mixture_portfolio} could be a good approximation.

Consider two investors with the same market-view, and whose preferences develop as time-monotone power FPPs with power coefficients $p<q$. That is
\begin{align}
    U^1_t(x) &= A_0 x^p e^{-\frac{p}{2(1-p)}\int_0^t |\lambda_s|^2\dd s},\quad U^2_t(x) = D_0 x^q e^{-\frac{q}{2(1-q)}\int_0^t |\lambda_s|^2\dd s}.
\end{align} 
When faced with the market dynamics as in \eqref{eq:stock.dynamics}, we know that each respective investor's optimal allocation within the risky investment vehicles would be identical. They would only differ in the percentage of their wealth that their respective optimal portfolio allocations within these risky assets, $\pi^1, \pi^2$ would be satisfying
\begin{align}
    \sigma_t\pi_t^{1,*} = \frac{1}{1-p}\lambda_t, \quad     \sigma_t\pi_t^{2,*} = \frac{1}{1-q}\lambda_t.
\end{align}
Thus, to appease both of the investors, we will be looking for strategies that have identical risky asset allocation to the individual investors' strategies, and should only differ in the coefficient. That is, our admissible portfolio space will be restricted to
\begin{align}
    \A_{\lambda}^{v} = \bigg\{\pi \in \A^{v} \bigg| \sigma_t\pi_t = \frac{1}{1 - z_t} \lambda_t, \textrm{ for some } z_t \in (0,1) \textrm{ adapted to } \F_t\bigg\}.
\end{align}
For a portfolio 
\begin{align}
    \sigma_t\pi_t^z = \frac{\lambda_t}{1-z_t}
\end{align}
the joint utility will take the shape 
\begin{align}
    U_t\big(X_t^{\pi^z}\big) =&\: A_0 X_0^p \ee\bigg(\frac{p}{1-z}\int_0^t\lambda_s \cdot \dd W_s\bigg) \exp\bigg(-\frac{p}{2(1-p)}\int_0^t \frac{(z_s-p)^2}{(1-z_s)^2} |\lambda_s|^2 \dd s\bigg) \\
    &+ D_0 X_0^q \ee\bigg(\frac{q}{1-z}\int_0^t\lambda_s \cdot \dd W_s\bigg) \exp\bigg(-\frac{q}{2(1-q)}\int_0^t \frac{(z_s-q)^2}{(1-z_s)^2} |\lambda_s|^2 \dd s\bigg).
\end{align}
For exposition purposes we limit ourselves to a one-stock complete market case scenario, where the stock price develops as a geometric Brownian motion
\begin{align}
    \frac{\dd S_t}{S_t} = \sigma_t\big( \lambda \dd t + \dd W_t \big).
\end{align}
Let us consider a subclass of $\A_{\lambda}$, where the coefficient is constant for all time $t>0$, $z_t = z \in (0,1)$. That is, the investors come to an agreement about a constant portion of the joint wealth to invest in the risky asset, and keep that proportion until the end of time. In that case, the expression for the joint utility will be
\begin{align}
    U_t\big(X_t^{\pi^z}\big) =&\: A_0 X_0^p \ee\bigg(\frac{p}{1-z}\lambda \cdot W_t\bigg) \exp\bigg(- \frac{p(z-p)^2\lambda^2 t}{2(1-p)(1-z)^2}\bigg) \\
    &+ D_0 X_0^q \ee\bigg(\frac{q}{1-z} \lambda \cdot W_t\bigg) \exp\bigg(- \frac{q(z-q)^2\lambda^2 t}{2(1-q)(1-z)^2}\bigg).
\end{align}
From here, we can calculate the explicit expression for the expectation of the forward utility
\begin{align}\label{eq:constant_expectation} \begin{split}
    \E\big[U_t(X_t^{\pi^z})\big]  =& \: A_0 X_0^p \exp\bigg( - \frac{p(z-p)^2\lambda^2 t}{2(1-p)(1-z)^2} \!\bigg) \! \\
    &+ \! D_0 X_0^q \exp\bigg(\! - \frac{q(z-q)^2\lambda^2 t}{2(1\!-\!q)(1\!-\!z)^2} \! \bigg). 
    \end{split}
\end{align}
The expression above is not a concave function of $z$ and does not even always have a unique optimizer. In fact, keeping $p, A_0$ and $q, D_0$ fixed, the optimizer in general will change as a function of $X_0$, $t$, and $\lambda$.

\subsubsection{Numerical illustrations}

Let us illustrate some examples of the functional shape in \eqref{eq:constant_expectation}. Below we plot all constant-proportion strategies 30 years into the future. Let $A_0 = D_0 = 1$, we will change the values of $p,q,\lambda,X_0$ and observe how that affects the shape of the expectation.
\begin{itemize}
    \item $p = 0.1, q = 0.3, \lambda = 1, X_0 = 1$.
    \begin{figure}[ht!]
    \includegraphics[width = \textwidth]{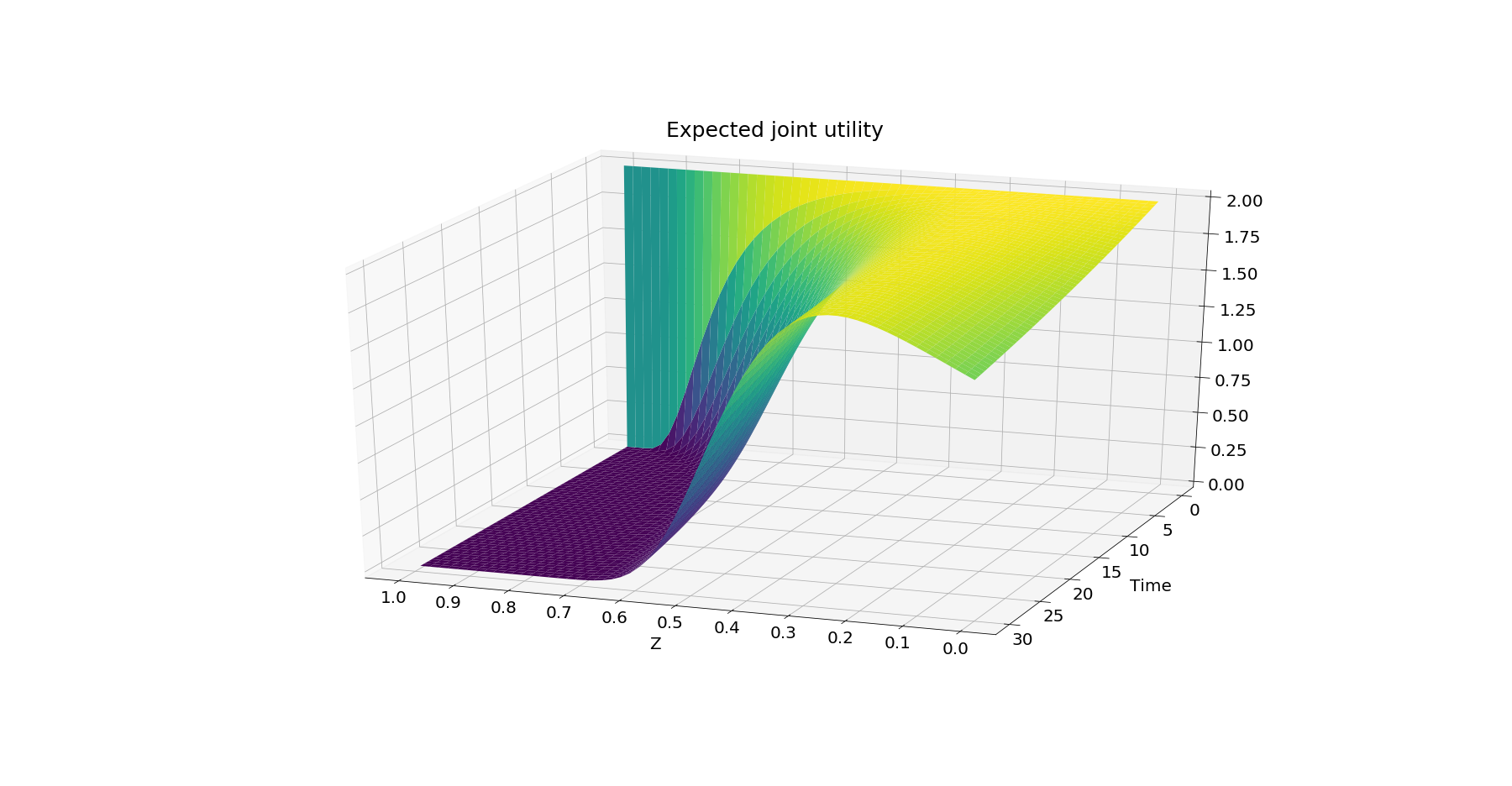}
    \caption{Expected joint utility throughout time for parameter values $p = 0.1, q = 0.3, \lambda = 1, X_0 = 1$}\label{fig:exp_util}
    \end{figure}
    As we can see in Figure~\ref{fig:exp_util}, for the 30 year period there seems to be a quite stable optimizer located at at around $z = 0.25$ for all times $t \in [0,30]$. Increasing $X_0$ pushes this optimizer further toward the value of $q$, with the wealth $X_0 > 1000$ resulting in $z^* \approx 0.3$. Unsurprisingly, making $X_0$ go to $0$ results in $z^*$ approaching the value of $p$. The initial wealth has to be very small $X_0 < 0.001$ for $z^*$ to get close enough to 0.1. In any case one general trend that we observe is that the optimal decision always results in compromise. That is, it is jointly beneficial for both parties to abandon their individual optimal allocations and meet somewhere in the middle. 
    \item $p = 0.1, q = 0.3, \lambda = 0.5, X_0 = 1$. 
    \begin{figure}[ht!]
    \includegraphics[width = \textwidth]{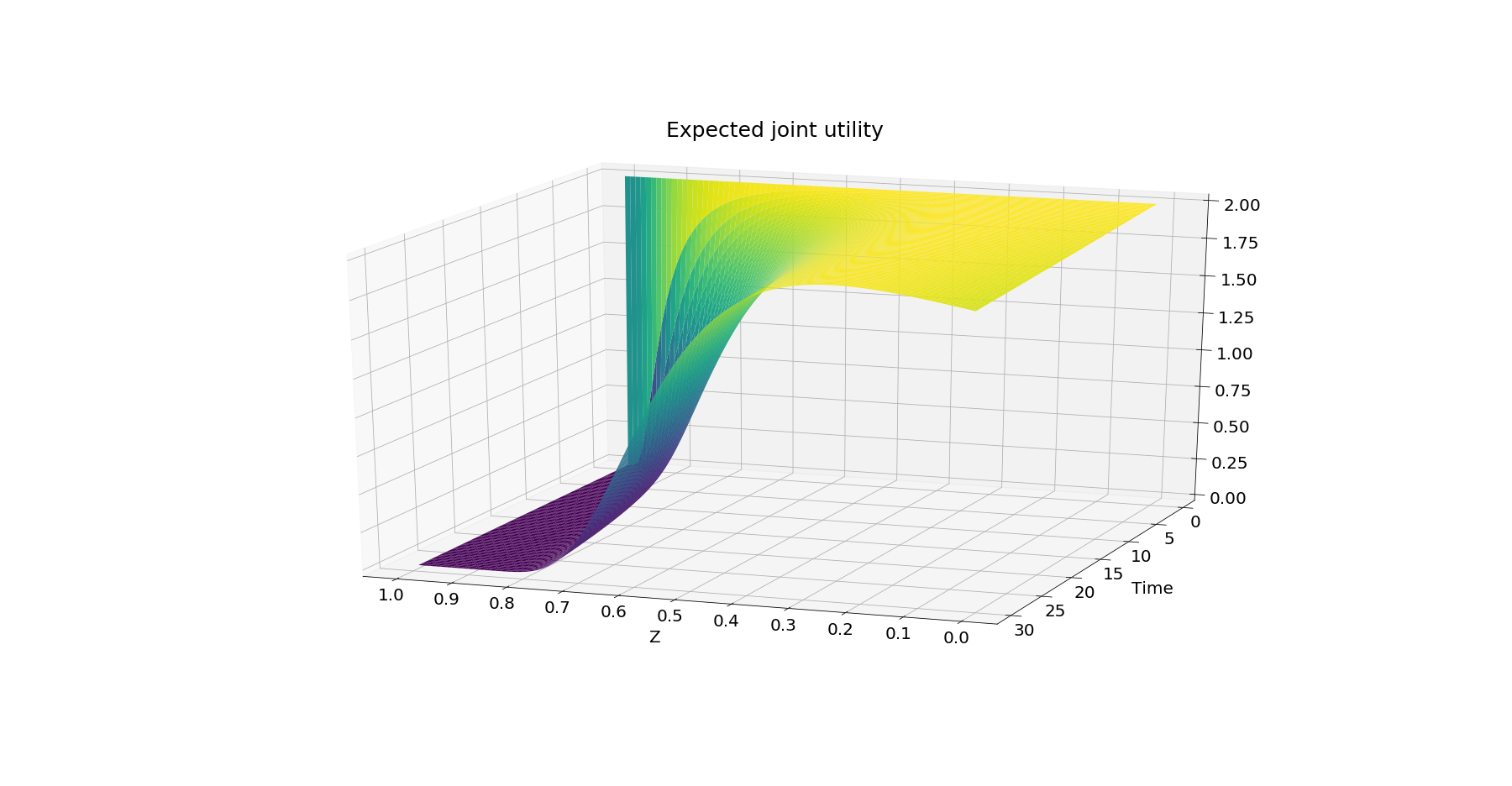}
    \caption{Expected joint utility throughout time for parameter values $p = 0.1, q = 0.3, \lambda = 0.5, X_0 = 1$}\label{fig:exp_util_sharpe_low}
    \end{figure}
    Making the investment opportunity less lucrative by setting Sharpe to be 0.5, shifts the compromise exactly towards the center $z^* \approx 0.2$ (see 
    Figure~\ref{fig:exp_util_sharpe_low}). That is, a more risk averse investor's opinion matters more when there is less promise of rags-to-riches.
    \item $p = 0.1, q = 0.3, \lambda = 4, X_0 = 1$. 
    \begin{figure}[ht!]
    \includegraphics[width = \textwidth]{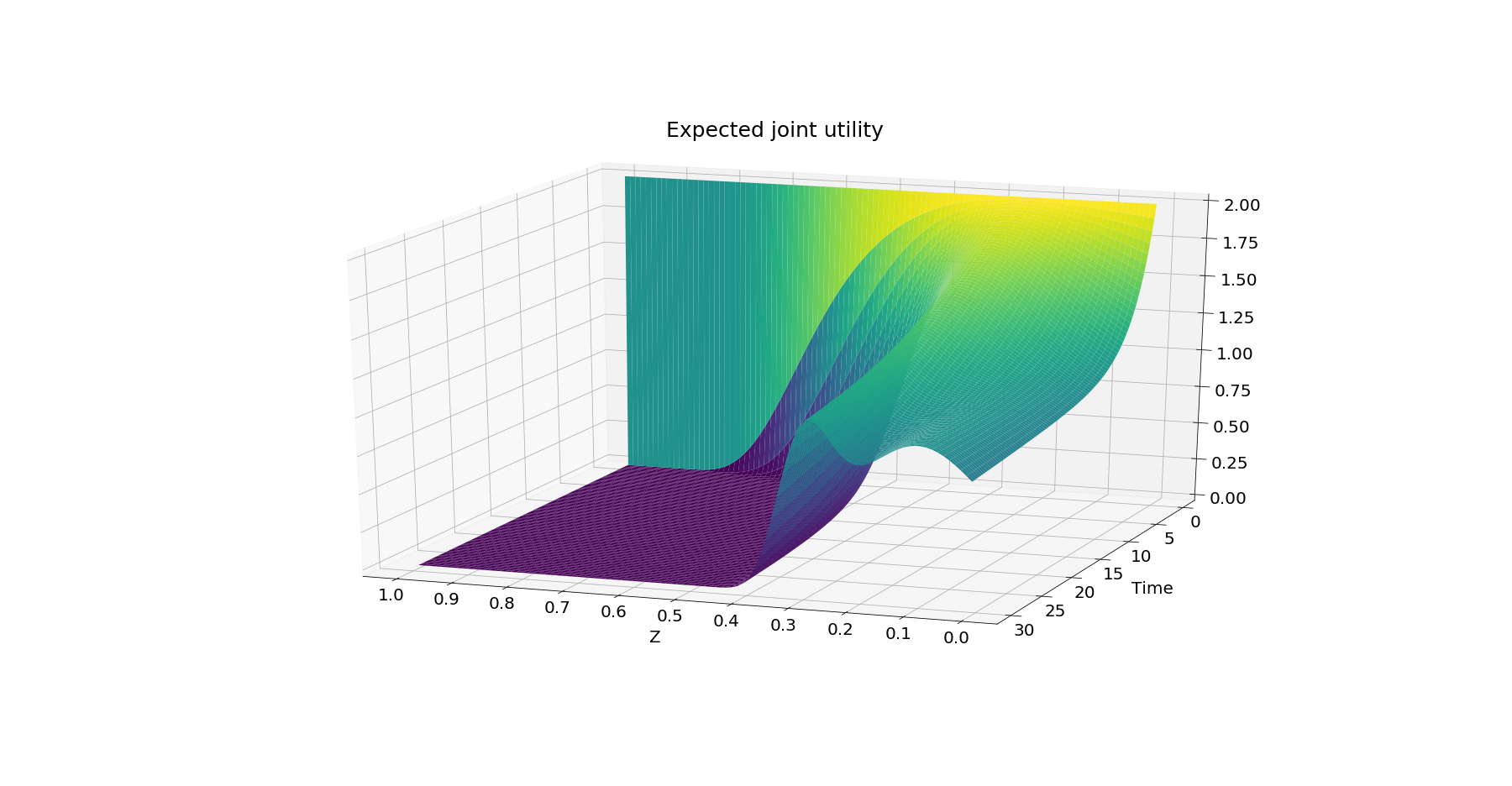}
    \caption{Expected joint utility throughout time for parameter values $p = 0.1, q = 0.3, \lambda = 4, X_0 = 1$}\label{fig:exp_util_sharpe_high}
    \end{figure}
    Changing the Sharpe ratio significantly has changed the shape of the curve in Figure~\ref{fig:exp_util_sharpe_high} so that there are two local maxima now, one located close to $p$, and the other close to $q$ respectively. The optimal allocation is still one requiring compromise, however it is further shifted towards the optimal allocation of the less risk averse investor. In particular, what we see is that in the long run the more risk averse investor would have to bend to the will of their more risk loving peer, which might result in the arrangement crumbling. That is, when the investment opportunity is particularly lucrative, people would become less willing to compromise over the long term. Finally, the shape is also highly dependent on the initial wealth parameter. Setting $X_0 = 0.1$ will result in the bump at $z = 0.1$ becoming the global optimum, whereas setting $X_0 = 1000$ would almost completely smooth this bump out. Either way we do observe that the expected value of the joint utility decreases much more than in the previous two cases. That is, the price of cooperation is particularly high when the investment is lucrative.
     \item $p = 0.1, q = 0.6, \lambda = 1, X_0 = 1$. 
     \begin{figure}[ht!]
    \includegraphics[width = \textwidth]{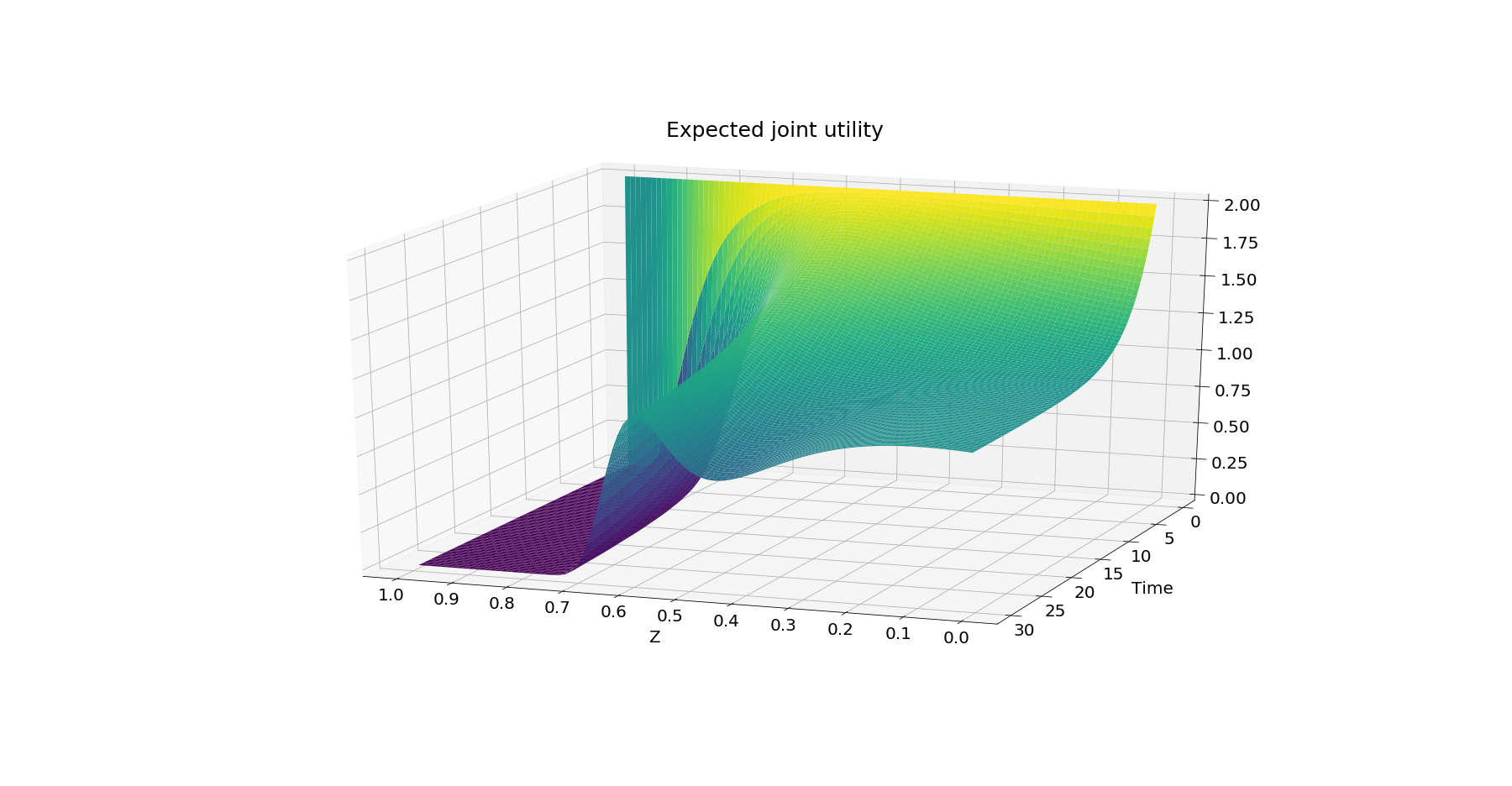}
    \caption{Expected joint utility throughout time for parameter values $p = 0.1, q = 0.6, \lambda = 1, X_0 = 1$}\label{fig:exp_util_risk_diff}
    \end{figure}
    Finally, letting the risk aversion gap increase deems long-term pooling of investment unreasonable as can be seen in Figure~\ref{fig:exp_util_risk_diff}. This shows that when pooling investment, choosing people who do not differ too much in terms of their risk aversions would result in longer willingness to keep investing together. That is, for the sake of keeping the collective togetherness longer, carefully selecting the partner prior to starting any venture could play a key role.
 \end{itemize}

 Having analyzed the constant-portion investing from different angles we come to a conclusion that given an investment opportunity, one can find investment partners with an appropriate risk aversion differential so that long-term cooperation is optimal.

 From our prior analysis it looks like cooperation does pay off over a 30-year period when $p = 0.1, q = 0.3, \lambda = 1, X_0 = 1$. Constant portion investing results in $z^* = 0.25$ for these parameter values. Now, we would like to compare constant portion investing to two investment strategies: $\pi^*$ from \eqref{eq:mixture_portfolio}, and $\pi^e$ which will be one-period expected utility maximizing portfolio (one period conditional version of \eqref{eq:constant_expectation}). We set the investment horizon to 30 years (periods), and rebalance the portfolios every year. We perform the simulation 1,000 times and present in Figure~\ref{fig:utility_comparison} the average paths of the respective joint utilities.
 \begin{figure}[h]
    \includegraphics[width = \textwidth]{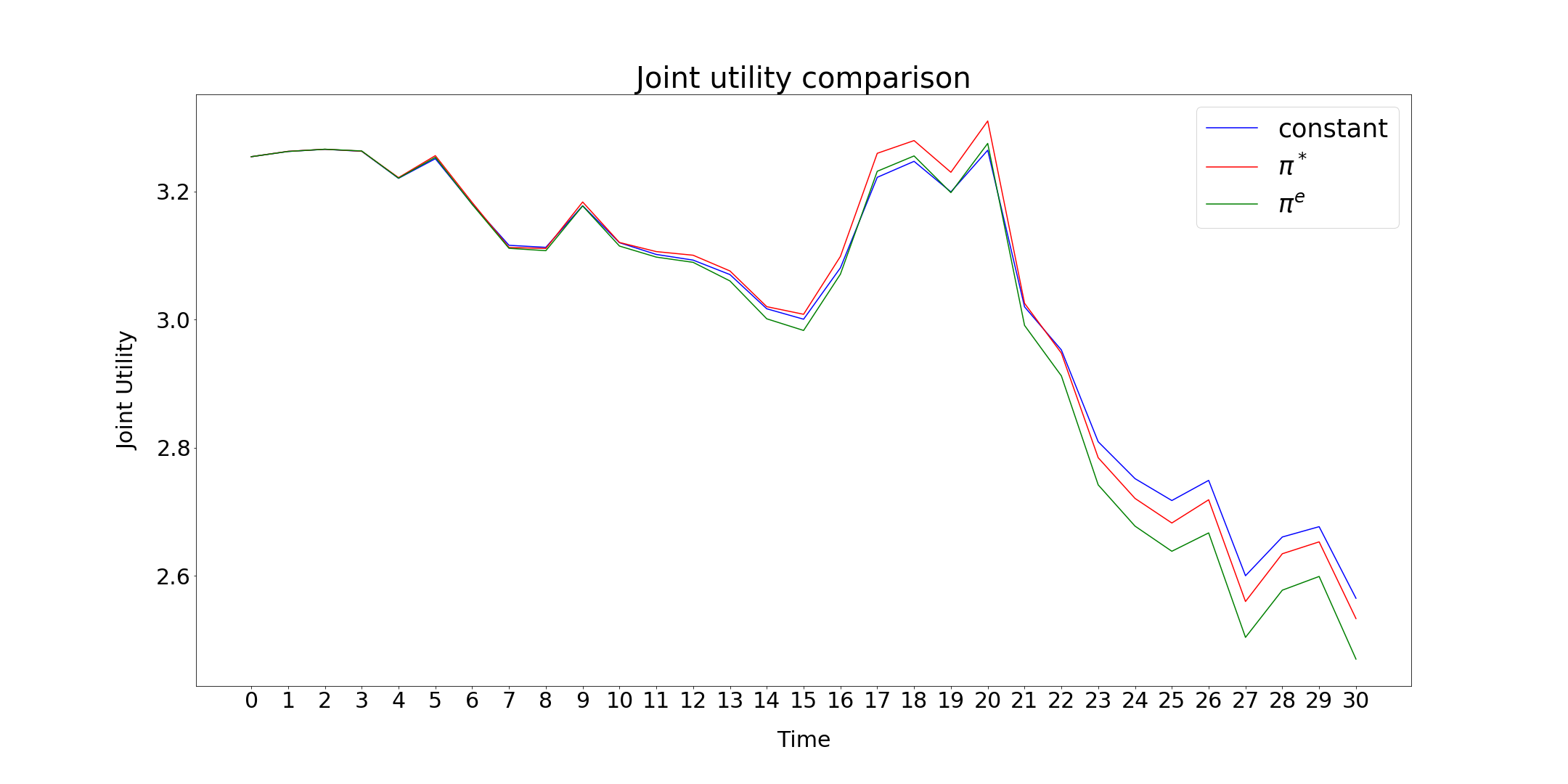}
    \caption{Forward utility performance of constant, $\pi^*$ and $\pi^e$ strategies}\label{fig:utility_comparison}
    \end{figure}
 \begin{figure}[h!]
    \includegraphics[width = \textwidth]{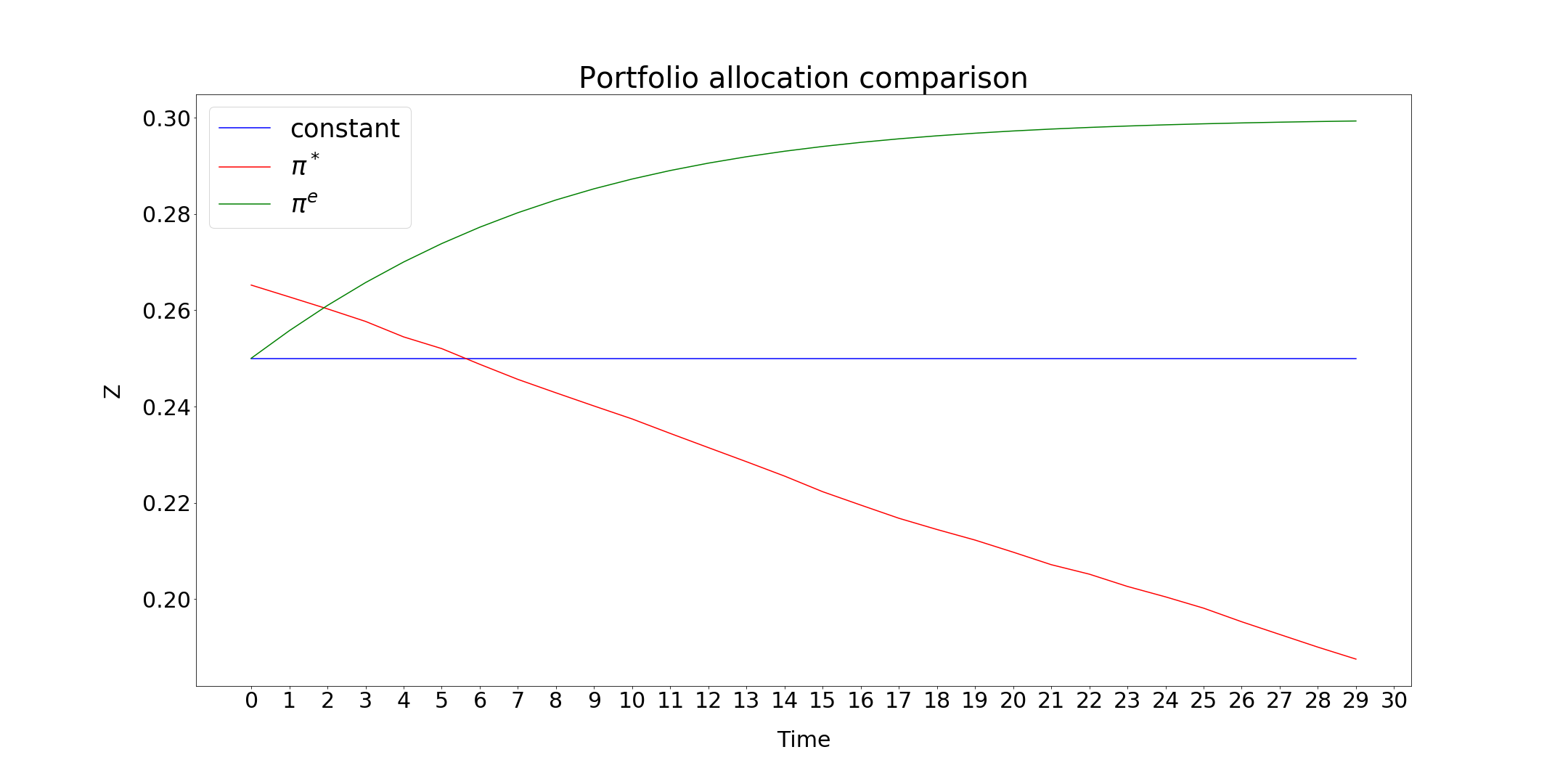}
    \caption{Portfolio allocations of the constant, $\pi^*$ and $\pi^e$ strategies throughout time.}\label{fig:portfolio_comparison}
    \end{figure}
As we can see the well-chosen constant allocation mostly dominates the other two strategies. It however places too much emphasis on satisfying the more risk loving investing partner. The strategy $\pi^e$ is even more guilty of such preferential treatment. As we can see in Figure~\ref{fig:portfolio_comparison}, the most fair choice of portfolio allocations into risky assets occurs when exercising the strategy $\pi^*$. As can be seen in Figure~\ref{fig:utility_comparison}, $\pi^*$ does not perform much worse than the constant optimal portion, and hence could be the investment strategy of choice.

\subsection{Explicit two power dual example}
In \cite{zitkovic2009} and \cite{berrier2009characterization} maturity-independent risk measures were constructed as convex duals of corresponding FPPs. We wrap up this section by discussing an example where we can find the dual of a two-power mixture FPP explicitly, thereby constructing new maturity-independent risk measures.

Consider a two-power mixture FPP with component risk aversions $\gamma$ and $2\gamma$
\begin{align}\label{eq:two-power_fpp_double_aversion}
    U_t(x) = A_t\frac{x^{1-2\gamma}}{1-2\gamma} + D_t\frac{x^{1-\gamma}}{1-\gamma}.
\end{align}
From Proposition \ref{prop:mixture_components} we get that $A_t,D_t$ must satisfy \eqref{eq:two-power_A_dynamics}, \eqref{eq:two-power_D_dynamics}, and \eqref{eq:two-power_fpp_FV}. Now let's consider the Legendre dual of $U_t(x)$, for $y\geq 0$
\begin{align}
    V_t(y) = \sup_{x \geq 0} A_t\frac{x^{1-2\gamma}}{1-2\gamma} + D_t\frac{x^{1-\gamma}}{1-\gamma} - xy.
\end{align}
Let us find the explicit expression for $V_t(y)$. Since $U(t,x)$ is globally concave and differentiable in $x$, we can find the supremum by equalizing to zero the first derivative
\begin{align}
    x_*^{-2\gamma}A_t + x_*^{-\gamma}D_t - y = 0.
\end{align}
Solving the quadratic equation for $x_*^{-\gamma}$ we finally get that
\begin{align}
    x_* = \bigg(\frac{-D_t + \sqrt{D_t^2 + 4 A_t y}}{2A_t}\bigg)^{-\frac{1}{\gamma}}.
\end{align}
Hence, by plugging in the expression for $x^*$ we can obtain the explicit expression for the dual $V_t(\cdot)$.
\section{Three-Power Mixture Forward Performance Construction}\label{sec:three-power}
The main results of the previous section are concerning necessary conditions two-power FPPs must satisfy. Lemma \ref{lem.non-negative-coefficients} and Proposition \ref{prop:mixture_components} show that we can only form two-power mixture FPPs as positive linear combinations of power FPPs. This, however, is not a necessary condition for general power mixture FPPs. We illustrate this point below by constructing a three-power mixture FPP from a non-positive linear combination of power FPPs.

Consider the process 
\begin{align}\label{eq:three-power-mixture}
\begin{split}
    U_t(x) =& \: \frac{x^{1-\gamma}}{1-\gamma} \ee\big(M_t^{\gamma}\big)\ee\big(V_t^{\gamma}\big) - \frac{x^{1-2\gamma}}{1-2\gamma} \ee\big(M_t^{2\gamma}\big)\ee\big(V_t^{2\gamma}\big)\\
    &+ 
    \frac{x^{1-3\gamma}}{1-3\gamma} \ee\big(M_t^{3\gamma}\big)\ee\big(V_t^{3\gamma}\big),
\end{split}
\end{align}
where $\gamma \in (0, 1/3)$, and $M_t^{\cdot}, V_t^{\cdot}$ are given as in equations \eqref{eq:local_mtg.gamma}, \eqref{eq:FV.gamma}, with 
\[\gamma_0 = 2\gamma, \quad H_t^{\gamma_0} = H_t^{2\gamma} = 0,\quad J_t^{\cdot} = 0.\]
Plugging these into \eqref{eq:three-power-mixture} we get 
\begin{align}
    U_t(x) =& \: \frac{x^{1-\gamma}}{1-\gamma} \ee\bigg(- \frac{1}{2} \int_0^t \lambda_s \cdot \dd W_s\bigg) e^{-\frac{1-\gamma}{8\gamma}\int_0^t \|\lambda_s\|^2 \dd s} - \frac{x^{1-2\gamma}}{1-2\gamma} e^{-\frac{1-2\gamma}{4\gamma}\int_0^t \|\lambda_s\|^2 \dd s} \\
    &+ \frac{x^{1-3\gamma}}{1-3\gamma} \ee\bigg( \frac{1}{2} \int_0^t \lambda_s \cdot \dd W_s\bigg)_t e^{-\frac{3-9\gamma}{8\gamma}\int_0^t \|\lambda_s\|^2 \dd s}.
\end{align}
Denoting $e^{\frac{1}{2}\int_0^t \lambda_s^T \dd W_s}:= Z_t$ yields
\begin{align}
    U_t(x) =& \: Z_t^{-1} \bigg( \frac{x^{1-\gamma}}{1-\gamma} e^{-\frac{1}{8\gamma}\int_0^t \|\lambda_s\|^2 \dd s} - \frac{x^{1-2\gamma}}{1-2\gamma} e^{-\frac{1-2\gamma}{4\gamma}\int_0^t \|\lambda_s\|^2 \dd s} Z_t \\
    &+ \frac{x^{1-3\gamma}}{1-3\gamma} e^{(1 - \frac{3}{8\gamma})\int_0^t \|\lambda_s\|^2 \dd s} Z_t^2\bigg).
\end{align}
Now, to show that $U_{\cdot}(\cdot)$ is indeed an FPP let us first show that it is strictly increasing and concave in $x$. Taking the first and second derivatives in $x$ yields
\begin{align}
    \partial_x U_t(x) =& \: Z_t^{-1}x^{-\gamma}\big( e^{-\frac{1}{8\gamma}\int_0^t \|\lambda_s\|^2 \dd s} -  e^{-\frac{1-2\gamma}{4\gamma}\int_0^t \|\lambda_s\|^2 \dd s} x^{-\gamma}Z_t + e^{(1 - \frac{3}{8\gamma})\int_0^t \|\lambda_s\|^2 \dd s} x^{-2\gamma}Z_t^2 \big),\\
    \partial_x^2 U_t(x) =& \: -  \gamma Z_t^{-1}x^{-\gamma - 1}\big( e^{-\frac{1}{8\gamma}\int_0^t \|\lambda_s\|^2 \dd s} -  2 e^{-\frac{1-2\gamma}{4\gamma}\int_0^t \|\lambda_s\|^2 \dd s} x^{-\gamma}Z_t\\ 
    &+ 3 e^{(1 - \frac{3}{8\gamma})\int_0^t \|\lambda_s\|^2 \dd s} x^{-2\gamma}Z_t^2 \big).
\end{align}
Thus, we get that $U_{t}(x)$ is strictly increasing and concave in $x$ if and only if 
\begin{align}
    & e^{-\frac{1}{8\gamma}\int_0^t \|\lambda_s\|^2 \dd s} -  e^{-\frac{1-2\gamma}{4\gamma}\int_0^t \|\lambda_s\|^2 \dd s} x^{-\gamma}Z_t + e^{(1 - \frac{3}{8\gamma})\int_0^t \|\lambda_s\|^2 \dd s} x^{-2\gamma}Z_t^2 > 0,\\
    &e^{-\frac{1}{8\gamma}\int_0^t \|\lambda_s\|^2 \dd s} -  2 e^{-\frac{1-2\gamma}{4\gamma}\int_0^t \|\lambda_s\|^2 \dd s} x^{-\gamma}Z_t + 3 e^{(1 - \frac{3}{8\gamma})\int_0^t \|\lambda_s\|^2 \dd s} x^{-2\gamma}Z_t^2 > 0 .
\end{align}
The above expressions are both quadratic polynomials in $x^{-\gamma}Z_t$, with respective discriminants $- 3 e^{- (1-2\gamma)/2\gamma}$ and $- 8 e^{- (1-2\gamma)/2\gamma}$. Since both the discriminants are negative, we obtain the desired inequalities. Thus, $U_{\cdot}(\cdot)$ is strictly increasing and concave in the wealth parameter.

Now, let us show that $(U_t(X_t^{\pi}))$ is a supermartingale for all admissible $\pi \in \mathcal{A}^v$, and that it is a martingale for some $\pi^* \in \mathcal{A}^v$. For sake of convenience denote
\begin{align}
    A_t & :=  \ee\big( M_t^{\gamma} \big)\ee\big( V_t^{\gamma} \big), \quad C_t :=  \ee\big( M_t^{2\gamma} \big)\ee\big( V_t^{2\gamma} \big), \quad
    D_t := \ee\big( M_t^{3\gamma} \big)\ee\big( V_t^{3\gamma} \big).
\end{align}
Applying It\^{o}'s lemma yields
\begin{align}
    \dd U_t\big(X_t^{\pi}\big) =& \: - Z_t^{-1}\big(X_t^{\pi})^{1-\gamma}\Big(  e^{-\frac{1}{8\gamma}\int_0^t \|\lambda_s\|^2 \dd s} - 2 e^{-\frac{1-2\gamma}{4\gamma}\int_0^t \|\lambda_s\|^2 \dd s} Z_t \big(X_t^{\pi}\big)^{-\gamma} \\
    &+ 3e^{(1 - \frac{3}{8\gamma})\int_0^t \|\lambda_s\|^2 \dd s} Z_t^2 \big(X_t^{\pi}\big)^{-2\gamma}  \Big)
    \times \bigg( \frac{1}{8 \gamma}\|\lambda_t\|^2  - \frac{1}{2}(\sigma_t\pi_t)^{\top}\lambda_t + \frac{\gamma}{2} \|\sigma_t \pi_t\|^2 \bigg) \dd t \\
    & + \big(X_t^{\pi}\big)^{1-\gamma}\Big( A_t - C_t \big(X_t^{\pi}\big)^{-\gamma} + D_t \big(X_t^{\pi}\big)^{-2\gamma}\Big)(\sigma_t\pi_t)^{\top} \dd W_t \\
    &- \frac{1}{2(1-\gamma)}\big(X_t^{\pi}\big)^{1-\gamma} A_t \lambda_t^{\top} \dd W_t  +  \frac{1}{2(1-3\gamma)}\big(X_t^{\pi}\big)^{1-3\gamma} D_t \lambda_t^{\top} \dd W_t.
\end{align}
Let us first consider the drift term. Note that it is written as a multiplication of two quadratic polynomials in $Z_t(X_t^{\pi})^{-\gamma}$ and $\sigma_t\pi_t$ respectively. The former polynomial has a negative discriminant and thus is strictly positive. The latter polynomial is strictly convex and admits the minimizer
\begin{align}
\sigma_t \pi_t^* &= \frac{1}{2\gamma} \lambda_t.
\end{align}
Plugging it back into the drift term produces the value of 0. Thus, for all admissible $\pi$ the drift is negative, and is equal to 0 for the portfolio $\pi^*$. Therefore $U_t(X_t^{\pi})$ is a local supermartingale and $U_t(X_t^{\pi^*})$ is a local martingale.

Finally, as $\lambda$ is bounded and $\pi \in \mathcal{A}^v$, following the same steps as in the proof of Theorem \ref{thm:true.fpp} we obtain that the local martingale term is indeed a martingale, and that $\pi^*$ is indeed admissible. Thus, $U_{\cdot}(\cdot)$ is a true FPP, and therefore we have constructed a three-power mixture that is not a positive linear combination of power FPPs. That is, the necessary conditions that we have obtained for the two-power mixtures do not apply to the more general power-mixtures.

Moreover, this example covers a class of admissible initial conditions 
\[ U_0(x) = \frac{x^{1-\gamma}}{1-\gamma} - \frac{x^{1-2\gamma}}{1-2\gamma} + \frac{x^{1-3\gamma}}{1-3\gamma} \quad s.t.\: \gamma \in \bigg(0,\frac{1}{3}\bigg)\]
for constructing true FPPs, that is not covered by Theorems \ref{thm:local.FPP} and \ref{thm:true.fpp}. That is, the three-power mixture example shows that to construct a power mixture FPP, the measure $\nu$ does not have to be positive. This stands in contrast to the two-power mixture case where the measure $\nu$ has to be positive (see Lemma \ref{lem.non-negative-coefficients}). 
\begin{rmk}
All the summands in this three-power mixture 
$$\frac{A_t}{1-\gamma} x^{1-\gamma},\quad  \frac{C_t}{1-2\gamma} x^{1-2\gamma},\quad \frac{D_t}{1-3\gamma} x^{1-3\gamma}$$
are true FPPs themselves. The coefficients of the summands do not change sign in this construction of a three-power mixture FPP. It might be possible to obtain this observation as a necessary condition for the general power mixture case. We leave this for future investigation.
\end{rmk}

\section{Conclusion}\label{sec:conclusion}
In this work we have considered the problem of forward optimal investment in a continuous semimartingale market model adapted to a Brownian filtration. We have introduced and constructed the broad class of power mixture forward performance processes in Theorems \ref{thm:local.FPP} and \ref{thm:true.fpp}. As a direct result of these theorems we established a new class of power true FPPs in Corollaries \ref{cor:local.FPP.power} and \ref{cor:true.fpp.power}, and provided a discussion of the parametrization. For two-power mixtures we have obtained in Lemma \ref{lem:powers_monotonic} that the powers can only be finite-variation processes, and if the smaller power is constant in Proposition \ref{prop:one_power_constant}, we have established that the other power should be constant too. In Proposition \ref{prop:mixture_components}, we completely characterized the two-power mixture FPPs with constant power coefficients. Finally, in Section \ref{sec:three-power} we constructed a three-power mixture FPP that exhibits the limitations of extending our obtained necessary conditions for the two-power mixtures to the general case. Future directions could include obtaining more relaxed necessary conditions for the general power mixture case.

Power mixture FPPs can be interpreted as the joint consistent utility derived by a pool of investors with different risk aversions. A particular case of interest is when dealing with pooling investment decisions of two investors whose initial utilities are of power form. In Proposition \ref{prop:mixture_components} we have obtained that to have their joint dynamic utility develop as an FPP, the individual investors' preferences must evolve as power FPPs themselves and must admit identical optimal strategies. So, in the two-investor case, there is no cost to cooperation only when the investors would have made the same decisions anyways. This led us to question what happens when the two investors have different optimal portfolio allocations. In Section \ref{subsec:non-fpp} we discussed the case of two investors who share the same ``market-view'' but have different risk aversions. We observed that if the investor's risk aversions are close enough, then setting up a joint investment pool could be a reasonable proposition. We leave the analysis of the joint investment problem when the investors do not share the same ``market-view'' for future work.

What makes forward performance processes useful is the fact that we can obtain the optimal investment strategy through the martingale-supermartingale structure of the utility random field. A curious question for future investigation is whether such structure is necessary to have a dominating investment strategy. That is, an interesting line of future work could be trying to construct  a large class of random fields that are strict supermartingales for all admissible portfolios, but still admit a provably optimal strategy.

\appendix
\section{Proof of Theorem \ref{thm:true.fpp}}\label{appendix.a}
Before we proceed to prove Theorem \ref{thm:true.fpp}, first let us prove a useful result.
\begin{lemma}\label{lem:exp.integrability}
Consider a compact set $\mathbb{I} \subset (0, \infty)/\{1\}$, a positive finite measure $\nu$, and  two families of stochastic processes $\{X_t^{\gamma}\}_{\gamma \in \mathbb{I}},\, \{b_t^{\gamma}\}_{\gamma \in \mathbb{I}}\in \F_t $ such that for some $T, c>0$ 
$$\sup_{t\in[0,T], \gamma \in \mathbb{I}} |b_t^{\gamma}|^2 < \infty, \quad \EE\bigg[\int_{\mathbb{I}} \exp\bigg(c\int_{0}^T|X_t^{\gamma}|^2\dd t\bigg)\nu(\dd \gamma) \bigg] < \infty.$$
Then, for all $c_1 \in [0,c)$\\
\begin{align}
    \EE\bigg[\int_{\mathbb{I}}\exp\bigg(c_1 \int_0^T |X_t^\gamma + b_t^{\gamma}|^2\dd t\bigg) \nu(\dd \gamma)\bigg]< \infty.
\end{align}
\end{lemma}

\begin{proof}
Fix a $c_1 \in [0,c)$. Then, by the triangle inequality
\begin{align}
    \exp \bigg(c_1 \int_0^T | X_t^{\gamma} + b_t^{\gamma} |^2 \dd t\bigg) \leq&\: C \exp\bigg(c_1 \int_0^T | X_t^{\gamma} |^2  + c_0|X_t^{\gamma}| \dd t\bigg)\\
    =& \: C\exp\bigg(c_1 \int_0^T (| X_t^{\gamma} |^2  + c_0|X_t^{\gamma}|) \bone_{\{|X_t^{\gamma}| \leq 
    c_0c_1/(c - c_1)\}} \dd t\bigg) \\
    &\times \exp\bigg(c_1 \int_0^T (| X_t^{\gamma} |^2  + c_0|X_t^{\gamma}|) \bone_{\{|X_t^{\gamma}| > c_0c_1/ (c - c_1)\}} \dd t\bigg) \\
    \leq& \: C_1 e^{c\int_0^T |X_t^{\gamma}|^2 \dd t},
\end{align}
where $c_0, C, C_1 >0$ are constants. Integrating with respect to $\nu(\cdot)$ and taking the expectations on the both sides of the inequality concludes the proof.
\end{proof}

\begin{proof}[Proof of Theorem \ref{thm:true.fpp}]
Without loss of generality let $\gamma_0 = \sup\mathbb{I}$. Fix some $T>0$. Let us first show that $U_t(X_t^{\pi})$ is a local FPP. For that it is enough to check that the condition \eqref{eq:fubini.suff} in Theorem \ref{thm:local.FPP} is satisfied. Using H\"{o}lder's inequality we get
\begin{align}
    \EE\bigg[\int_{\mathbb{I}} \bigg(\int_0^t (X_s^{\pi})^{2(1-\gamma)}\big(\ee(M_t^{\gamma})\ee(V_t^{\gamma})\big)^2 & \bigg( \bigg|\frac{1}{1-\gamma}H_s^{\gamma} + \sigma_s\pi_s\bigg|^2 + \bigg|\frac{1}{1-\gamma}J_s^{\gamma}\bigg|^2 \bigg) \dd s\bigg)^{\frac{1}{2}}\nu(\dd \gamma)\bigg] \\
    < & \: \nu(\mathbb{I})^{\frac{1}{2}}\times \EE\bigg[\int_{\mathbb{I}} \int_0^t (X_s^{\pi})^{2(1-\gamma)}\big(\ee(M_t^{\gamma})\ee(V_t^{\gamma})\big)^2\\
    & \times \! \bigg( \bigg|\frac{1}{1-\gamma}H_s^{\gamma} + \sigma_s\pi_s\bigg|^2 \! + \bigg|\frac{1}{1-\gamma}J_s^{\gamma}\bigg|^2 \bigg) \nu(\dd \gamma) \dd s \bigg]^{\frac{1}{2}}\!.
\end{align}
Proving that the above expectation is finite will automatically yield that condition \eqref{eq:fubini.suff} holds.
Before that, let us first consider the process
\begin{align}\label{eq:thm.fpp.proof.u.tilde}
\begin{split}
    \tilde{U}_t(X_t^{\pi}) :=&\: \int_0^t \int_{\mathbb{I}} (X_s^{\pi})^{1-\gamma}\ee(M_s^{\gamma})\ee(V_s^{\gamma}) D^{\gamma}(\pi_s) \nu (\dd \gamma) \: \dd s  \\
    & + \sum_{i = 1}^{d_W}  \int_0^t \int_{\mathbb{I}} (X_s^{\pi})^{1-\gamma}\ee(M_s^{\gamma})\ee(V_s^{\gamma}) \\
    &\times \bigg(\frac{\gamma}{(1-\gamma)\gamma_0} H_s^{\gamma_0} + \frac{\gamma - \gamma_0}{(1-\gamma)\gamma_0}\lambda_s + \sigma_s \pi_s \bigg)_i \, \nu (\dd \gamma)\, \dd W_{s,\, i}\\
    & + \sum_{j = 1}^{d_{\W}} \int_0^t \int_{\mathbb{I}} \frac{1}{1-\gamma} (X_s^{\pi})^{1-\gamma}\ee(M_s^{\gamma})\ee(V_s^{\gamma}) J_{s,\,i}^{\gamma}\, \nu (\dd \gamma)\, \dd \W_{s,\,j} \\
    =:& \:\mathrm{Drift} + (W\textrm{-local mtg.}) + (\W\textrm{-local mtg.}).
\end{split}
\end{align}
Note that for all $\gamma \in \mathbb{I}$ the drift terms $D^{\gamma}(\cdot)$ are non-positive, and $D^{\gamma}(\pi^*) = 0$. Thus, $\tilde{U}_t(x)$ is a local FPP. To prove that this process is a true FPP, it is enough to show two things. First, local martingale summands are indeed true martingales for all admissible portfolios. Second, the suggested optimal portfolio is indeed admissible.

Before doing all that, let us first show that 
$$\sup_{t\in [0,T]} \EE \bigg[ \int_{\mathbb{I}} \big(\ee (M_t^{\gamma})\ee(V_t^{\gamma})\big)^{\frac{2v}{v-1}}\nu(\dd \gamma) \bigg] < \infty.$$ 
For sake of notational simplicity denote $q := \frac{2v}{v-1} \geq 2$. Taking $p_1, p_2, p_3, p_4 \geq 1$ such that 
$$\frac{1}{p_1} + \frac{1}{p_2} + \frac{1}{p_3} + \frac{1}{p_4} = 1,$$ and applying H\"{o}lder's inequality to the product measure $\pp \times \nu$, we get
\begin{align}
    \EE  \bigg[ &\int_{\mathbb{I}}\big(\ee(M_t^{\gamma})\ee(V_t^{\gamma}) \big)^{q} \nu(\dd \gamma) \bigg]\\
    \leq& \: \EE\bigg[ \int_{\mathbb{I}} \ee\bigg(qp_1 \int_0^t \big(H_s^{\gamma} + (1-\gamma)\lambda_s \big) \cdot \dd W_s + qp_1 \int_0^t J_s^{\gamma} \cdot \dd \W_s\bigg) \nu(\dd \gamma)\bigg]^{\frac{1}{p_1}}\\
    &\times \EE \bigg[ \int_{\mathbb{I}} \exp\bigg(qp_2(\gamma-1)\int_0^t \lambda_s \cdot \dd W_s + \frac{qp_2(\gamma - 1)}{2}\int_0^t |\lambda_s|^2 \dd s \bigg) \nu(\dd \gamma) \bigg]^{\frac{1}{p_2}}\\
    &\times \EE \bigg[ \int_{\mathbb{I}} \exp\bigg(  \frac{qp_3}{2\gamma}\big(q p_1 \gamma - 1  \big)\int_0^t|H_s^{\gamma} + (1-\gamma)\lambda_s|^2 \dd s \bigg) \nu(\dd \gamma) \bigg]^{\frac{1}{p_3}}\\
    &\times \EE \bigg[ \int_{\mathbb{I}}\exp\bigg( \frac{qp_4}{2}\big( qp_1 - 1 \big)\int_0^t |J_s^{\gamma}|^2 \dd s \bigg) \nu(\dd \gamma) \bigg]^{\frac{1}{p_4}} <  \infty,
\end{align}
due to Lemma \ref{lem:exp.integrability}, and integrability conditions \eqref{eq:true.fpp.integrability.J} and \eqref{eq:true.fpp.integrability.H}. The upper bound on the right hand side of the inequality is an increasing function with time, hence 
$$\sup_{t\in[0,T]}\EE \bigg[\int_{\mathbb{I}}\big(\ee(M_t^{\gamma})\ee(V_t^{\gamma}) \big)^{q} \nu(\dd \gamma) \bigg] < \infty.$$

Let us now show that the $W$ and $\W$ local martingale summands in \eqref{eq:thm.fpp.proof.u.tilde} are indeed true martingales. Let us start with the stochastic integral terms with respect to $W$. It is enough to show that the quadratic variation of each summand $i = 1,\ldots, d_W,$ has a finite expectation. Applying H\"{o}lder's inequality for the measure $\nu(\cdot)$, Fubini's theorem and triangle inequality we get
\begin{align}
    \EE & \bigg[\int_0^t \bigg| \int_{\mathbb{I}} (X_s^{\pi})^{1-\gamma}\ee(M_s^{\gamma})\ee(V_s^{\gamma}) \bigg(\frac{\gamma}{(1-\gamma)\gamma_0} H_s^{\gamma_0} + \frac{\gamma - \gamma_0}{(1-\gamma)\gamma_0}\lambda_s + \sigma_s \pi_s \bigg)_i \nu (\dd \gamma)\bigg|^2 \dd s\bigg]\\
    &\leq \begin{aligned}[t] T \times \nu(\mathbb{I})\times &\bigg(\EE \int_0^t  \int_{\mathbb{I}} \frac{\gamma^2}{(1-\gamma)^2 \gamma_0^2}\: (X_s^{\pi})^{2-2\gamma}\big(\ee(M_s^{\gamma}) \ee(V_s^{\gamma})\big)^2 \big|H_{s,\,i}^{\gamma_0}\big|^2  \nu(\dd \gamma) \dd s \\
    &+ \EE \int_0^t  \int_{\mathbb{I}} \frac{(\gamma - \gamma_0)^2}{(1-\gamma)^2 \gamma_0^2}\: (X_s^{\pi})^{2-2\gamma}\big(\ee(M_s^{\gamma}) \ee(V_s^{\gamma})\big)^2 \big|\lambda_{s,\,i}\big|^2 \nu(\dd \gamma) \dd s \\
    &+ \EE \int_0^t  \int_{\mathbb{I}} \: (X_s^{\pi})^{2-2\gamma}\big(\ee(M_s^{\gamma}) \ee(V_s^{\gamma})\big)^2 \big|\sigma_{s,\,i\cdot}\pi_s\big|^2\nu(\dd \gamma) \dd s\bigg)
    \end{aligned}\\
    &=: C \times \big((I) + (II) + (III)\big).
\end{align}
Applying H\"{o}lder's inequality with respect to the measure $\pp \times \nu \times \dd t $ to the terms $(I)$, $(II)$, and $(III)$ respectively we get the following system of inequalities
\begin{align}
    (I)\:&\: \begin{aligned}[t]
    \leq&\:  \bigg(\EE \int_0^t \int_{\mathbb{I}}  (X_s^{\pi})^{2uv(1 - \gamma)}  \nu (\dd \gamma) \dd s \bigg)^{\frac{1}{uv}}\times \bigg(\EE\int_0^t\int_{\mathbb{I}}\big(\ee(M_s^{\gamma})\ee(V_s^{\gamma})\big)^{\frac{2v}{v-1}} \nu(\dd \gamma) \dd s \bigg)^{\frac{v-1}{v}} \\
    &\times \bigg(\EE \int_0^t |H_{s,i}^{\gamma_0}|^{\frac{2uv}{u-1}} \int_{\mathbb{I}} \bigg(\frac{\gamma}{(1-\gamma)\gamma_0}\bigg)^{\frac{2uv}{u-1}}  \nu(\dd \gamma) \dd s  \bigg)^{\frac{u-1}{uv}} < \infty ,
    \end{aligned}\\
    (II) \:&\: \begin{aligned}[t]
    \leq& \:  \bigg(\EE \int_0^t \int_{\mathbb{I}}  (X_s^{\pi})^{2uv(1 - \gamma)}  \nu (\dd \gamma) \dd s \bigg)^{\frac{1}{uv}}\times \bigg(\EE\int_0^t\int_{\mathbb{I}}\big(\ee(M_s^{\gamma})\ee(V_s^{\gamma})\big)^{\frac{2v}{v-1}} \nu(\dd \gamma) \dd s \bigg)^{\frac{v-1}{v}} \\
    &\times \bigg(\EE \int_0^t |\lambda_{s,i}|^{\frac{2uv}{u-1}} \int_{\mathbb{I}} \bigg(\frac{\gamma - \gamma_0}{(1-\gamma)\gamma_0}\bigg)^{\frac{2uv}{u-1}}  \nu(\dd \gamma) \dd s  \bigg)^{\frac{u-1}{uv}} < \infty ,
    \end{aligned}\\
    (III) \:&\: \begin{aligned}[t]
    \leq& \: \bigg( \EE \int_0^t \int_{\mathbb{I}}(X_s^{\pi})^{2v(1-\gamma)}|\sigma_{s,i}\pi_s|^{2v} \nu(\dd \gamma) \dd s\bigg)^{\frac{1}{v}} \\
    &\times \bigg(\EE\int_0^t \int_{\mathbb{I}} \big(\ee(M_s^{\gamma})\ee(V_s^{\gamma})\big)^{\frac{2v}{v-1}} \nu(\dd \gamma) \dd s \bigg)^{\frac{v-1}{v}} < \infty.
    \end{aligned}
\end{align}
This shows that all the stochastic integrals in question are in fact square integrable martingales. 

Now, let us prove that the stochastic integrals with respect to $\W$ are true martingales as well. Similarly, let's consider each summand $j = 1,\ldots,d_{\W}$, and prove that their respective quadratic variations have finite expectations. Proceeding in the exact same way we get
\begin{align}
    \EE \int_0^t & \bigg| \int_{\mathbb{I}} \frac{1}{1-\gamma} (X_s^{\pi})^{1-\gamma}\ee(M_s^{\gamma})\ee(V_s^{\gamma}) J_{s,\,i}^{\gamma}\, \nu (\dd \gamma) \bigg|^2\, \dd s \\
    & \leq
    \begin{aligned}[t]
    &T \times \nu(\mathbb{I})\times \EE \int_0^t  \int_{\mathbb{I}} \frac{1}{(1-\gamma)^2}\: (X_s^{\pi})^{2-2\gamma}\big(\ee(M_s^{\gamma}) \ee(V_s^{\gamma})\big)^2 \big|J_{s,\,i}^{\gamma}\big|^2  \nu(\dd \gamma) \dd s
    \end{aligned}\\
    &\leq
    \begin{aligned}[t]
        &  \bigg(\EE \int_0^t \int_{\mathbb{I}}  (X_s^{\pi})^{2uv(1 - \gamma)}  \nu (\dd \gamma) \dd s \bigg)^{\frac{1}{uv}}\times \bigg(\EE\int_0^t\int_{\mathbb{I}}\big(\ee(M_s^{\gamma})\ee(V_s^{\gamma})\big)^{\frac{2v}{v-1}} \nu(\dd \gamma) \dd s \bigg)^{\frac{v-1}{v}} \\
    &\times \bigg(\EE \int_0^t \int_{\mathbb{I}} \: \bigg|\frac{J_{s,i}^{\gamma}}{1-\gamma}\bigg|^{\frac{2uv}{u-1}} \nu(\dd \gamma) \dd s  \bigg)^{\frac{u-1}{uv}} < \infty. 
    \end{aligned}
\end{align}
This, along with $(I), (II), (III) < \infty$, shows that
\[\EE\bigg[\int_{\mathbb{I}} \int_0^t (X_s^{\pi})^{2(1-\gamma)}(\ee(M_s^{\gamma})\ee(V_s^{\gamma}))^2\bigg( \bigg|\frac{1}{1-\gamma}H_s^{\gamma} + \sigma_s\pi_s\bigg|^2 + \bigg|\frac{1}{1-\gamma}J_s^{\gamma}\bigg|^2 \bigg) \nu(\dd \gamma) \:\dd s \bigg]^{\frac{1}{2}}<\infty,\]
and thus condition \eqref{eq:fubini.suff} holds. Applying the stochastic Fubini Theorem (\cite[Theorem 2.2]{veraar2012stochastic}) yields that $U_t(X_t^{\pi}) = \tilde{U_t}(X_t^{\pi})$ for all $\pi \in \A^{v}$.
\bigskip

Last but not least, let us show that $\pi^*$ is indeed admissible, that is $\pi^*$ satisfies conditions \eqref{eq:admiss.integral} and \eqref{eq:admiss.sup}. Applying H\"{o}lder's and then Minkowski's inequalities we get
\begin{align}
    \EE \int_0^T \int_{\mathbb{I}} \: (X_t^{\pi^*})^{2v(1-\gamma)} |\sigma_t\pi_t^*|^{2v} \nu(\dd \gamma) \dd t =& \: C_0 \E \int_0^T \int_{\mathbb{I}} \: (X_t^{\pi^*})^{2v(1 - \gamma)} |\lambda_t + H_t^{\gamma_0} |^{2v} \nu(\dd \gamma)\dd t \\
   \leq& \: C_0 \bigg( \E \int_0^T \int_{\mathbb{I}} \: (X_t^{\pi^*})^{2uv(1-\gamma)} \nu(\dd \gamma )\dd t\bigg)^{\frac{1}{u}} \\
   &\times \Bigg(\bigg(  \E \int_0^T\int_{\mathbb{I}} \: | \lambda_t |^{\frac{2uv}{u-1}}\nu(\dd \gamma )\dd t\bigg)^{\frac{u-1}{2uv}} 
   \\ &+ \bigg(  \E \int_0^T \int_{\mathbb{I}} \:| H_t^{\gamma_0} |^{\frac{2uv}{u-1}}\nu(\dd \gamma)\dd t\bigg)^{\frac{u-1}{2uv}} \Bigg)^{2v} \\
   \leq& \: C_1 \bigg( \E \int_0^T \int_{\mathbb{I}} \:  (X_t^{\pi^*})^{2uv(1-\gamma)} \nu(\dd \gamma) \dd t\bigg)^{\frac{1}{u}},
\end{align}
since $\lambda$ is bounded and the supremum of the corresponding moments of $H_t^{\gamma_0}$ are bounded. Thus to verify \eqref{eq:admiss.integral} it is enough to verify \eqref{eq:admiss.sup}. For the convenience of the reader, let us write out the expression for the optimal wealth component
\begin{align}
    X_T^{\pi^*} &=
    \begin{aligned}[t]
    X_0 \exp\bigg( & \frac{1}{\gamma_0} \int_0^T \big(H_t^{\gamma_0} + (1-\gamma_0)\lambda_t\big)\cdot \dd W_t - \frac{1}{2\gamma_0^2} \int_0^T \big| H_t^{\gamma_0} + (1-\gamma_0)\lambda_t \big|^2 \dd t  \\
    &+ \int_0^T \lambda_t \cdot \dd W_t + \frac{1}{2} \int_0^T |\lambda_t|^2 \dd t \bigg). 
    \end{aligned}
\end{align}
Once again, taking the selection $p_1,p_2,\tilde{p}_3 >1$ such that $\frac{1}{p_1} + \frac{1}{p_2} + \frac{1}{\tilde{p}_3} = 1$, denoting $uv =: w$, and applying H\"{o}lder's inequality to the measure $\pp \times \nu$ yields
\begin{align}
    \EE& \bigg[\int_{\mathbb{I}} (X_T^{\pi^*})^{2w(1-\gamma)} \nu(\dd \gamma)\bigg]\\
    &\leq 
    \begin{aligned}[t]
    &\EE\bigg[\int_{\mathbb{I}} X_0^{2wp_1(1-\gamma)}\ee\bigg(\frac{2wp_1(1-\gamma)}{\gamma_0}\int_0^T \big(H_t^{\gamma_0} + (1 - \gamma_0)\lambda_t \big) \cdot \dd W_t\bigg) \nu (\dd \gamma)  \bigg]^{\frac{1}{p_1}}\\
    &\times \EE\bigg[ \int_{\mathbb{I}} \exp\bigg(\frac{wp_2(1-\gamma)}{\gamma_0^2}(2wp_1(1-\gamma) - 1)\int_0^T\big|H_t^{\gamma_0} + (1- \gamma_0)\lambda_t\big|^2 \dd t\bigg) \nu (\dd \gamma)\bigg]^{\frac{1}{p_2}} \\
    &\times \EE\bigg[\int_{\mathbb{I}} \exp\bigg( 2w\tilde{p}_3(1-\gamma)\int_0^T \lambda_t \cdot \dd W_t + w\tilde{p}_3(1-\gamma) \int_0^T |\lambda_t|^2  \bigg) \nu (\dd \gamma)\bigg]^{\frac{1}{\tilde{p}_3}}.
    \end{aligned}
\end{align}
Since $\lambda_t$ is bounded and $\mathbb{I}$ is compact we get
\begin{align}
    \EE \bigg[\int_{\mathbb{I}} (X_T^{\pi^*})^{2w(1-\gamma)} \nu(\dd \gamma)\bigg] &\leq
    \begin{aligned}[t]
    &C \bigg(\int_{\mathbb{I}} X_0^{2wp_1(1-\gamma)}  \nu(\dd \gamma) \bigg)^{\frac{1}{p_1}} \times \EE\bigg[ \int_{\mathbb{I}} \exp\bigg(\frac{wp_2(1-\gamma)}{\gamma_0^2}\\
    &\times(2wp_1(1-\gamma) - 1) \int_0^T\big|H_t^{\gamma_0} + (1- \gamma_0)\lambda_t\big|^2 \dd t\bigg) \nu (\dd \gamma)\bigg]^{\frac{1}{p_2}}
    \end{aligned}\\
    & < \infty,
\end{align}
due to Lemma \ref{lem:exp.integrability} and condition \eqref{eq:true.fpp.integrability.H}. Since the obtained bounds are increasing functions in $T$, we get that $\pi^*$ does indeed satisfy \eqref{eq:admiss.sup}. Thus $\pi^* \in \A^v$, which shows that $\tilde{U}_t(x)$, and therefore $U_t(x)$, is a true FPP.
\end{proof}

\section{Proof of Proposition \ref{prop:mixture_components}}\label{appendix.b}
\begin{proof}[Proof of Proposition \ref{prop:mixture_components}]
Just like previously, since $A_t, D_t > 0$ are continuous semimartingales, they can be represented in the following form
\begin{align}\label{eq:two-power_A_dynamics}
    \dd A_t & = \alpha_t A_t \dd t + a_t A_t \cdot \dd W_t + a_t^{\perp}A_t \cdot \dd \W_t,\\\label{eq:two-power_D_dynamics}
    \dd D_t & = \delta_t D_t \dd t + d_t D_t \cdot \dd W_t + d_t^{\perp}D_t \cdot \dd \W_t.
\end{align}
Taking any admissible portfolio $\pi$ and applying It\^{o}'s formula to $U_t(X_t^{\pi})$ yields
\begin{align}
    \dd U_t(X_t^{\pi}) =& \:\Big[ \alpha_t A_t (X_t^{\pi})^p \! + \! \delta_t D_t (X_t^{\pi})^q \! + \!  (\sigma_t \pi_t)^{\top} \big( pA_t (X_t^{\pi})^p (\lambda_t + a_t) \! + \! qD_t (X_t^{\pi})^q (\lambda_t + d_t) \big) \\
    &+ \frac{1}{2}|\sigma_t \pi_t|^2\big((p^2 - p)A_t (X_t^{\pi})^p + (q^2 - q)D_t (X_t^{\pi})^q \big) \Big]\dd t\\
    &+ \big((p\sigma_t \pi_t + a_t)A_t(X_t^{\pi})^p + (q\sigma_t \pi_t + d_t)D_t (X_t^{\pi})^q \big)\cdot\dd W_t \\
    &+  \big((X_t^{\pi})^p a_t^{\perp} A_t + (X_t^{\pi})^q d_t^{\perp} D_t\big) \cdot \dd \W_t.
\end{align}
Since $U_{\cdot}(\cdot)$ is a forward performance process, then for all admissible $\pi$ it is necessary that the finite variation term is non-increasing in time
\begin{align}
     \alpha_t A_t (X_t^{\pi})^p &+ \delta_t D_t (X_t^{\pi})^q +  (\sigma_t \pi_t)^{\top} \big( pA_t (X_t^{\pi})^p (\lambda_t + a_t) + qD_t (X_t^{\pi})^q (\lambda_t + d_t) \big) \\
    &+ \frac{1}{2}|\sigma_t \pi_t|^2\big((p^2 - p)A_t (X_t^{\pi})^p + (q^2 - q)D_t (X_t^{\pi})^q \big) \leq 0,
\end{align}
and is equal to $0$ for some admissible $\pi^*$. Note that the term on the left hand side of the above inequality is strictly concave in $\sigma_t \pi_t$. Thus, an optimal $\pi^*$ must be given as a solution to
\begin{align}\label{eq:mixture_portfolio}
    \sigma_t\pi_t^* &= \frac{pA_t (X_t^{\pi^*})^p (\lambda_t + a_t) + qD_t (X_t^{\pi^*})^q (\lambda_t + d_t)}{p(1-p)A_t (X_t^{\pi^*})^p + q(1-q)D_t (X_t^{\pi^*})^q},
\end{align}
where $X_t^{\pi^*}$ is the corresponding optimal wealth process. Plugging it in back into the drift integrand must equalize it to 0 
\begin{align}
\frac{1}{2}\frac{|pA_t (X_t^{\pi^*})^p (\lambda_t + a_t) + qD_t (X_t^{\pi^*})^q (\lambda_t + d_t)|^2}{p(1-p)A_t (X_t^{\pi^*})^p + q(1-q)D_t (X_t^{\pi^*})^q} + \alpha_t A_t (X_t^{\pi^*})^p + \delta_t D_t (X_t^{\pi^*})^q = 0.
\end{align}
Bringing everything to a common denominator the equation becomes
\begin{align}
    p^2|\lambda_t &+ a_t|^2 A_t^2(X_t^{\pi^*})^{2p} + 2pq(\lambda_t + a_t)^{\top}(\lambda_t + d_t)A_tD_t(X_t^{*})^{p+q} + q^2|\lambda_t + d_t|^2 D_t^2 (X_t^{\pi^*})^{2q} \\ =&\:-\big(2p(1-p)A_t^2 \alpha_t (X_t^{\pi^*})^{2p} +  2\big(p(1-p)\delta_t + q(1-q) \alpha_t\big) A_tD_t(X_t^{\pi^*})^{p+q} \\
    &+  2q(1-q)D_t^2 \delta_t (X_t^{\pi^*})^{2q} \big).
\end{align}
Matching the coefficients in front of the $2p$ and $2q$ powers of $X_t^{\pi^*}$ we get the following necessary characterization for $\alpha_t$ and $\delta_t$
\begin{align}\label{eq:two-power_fpp_FV}
    &\alpha_t = -\frac{p^2}{2p(1-p)}|\lambda_t + a_t|^2,\quad \delta_t = -\frac{q^2}{2q(1-q)}|\lambda_t + d_t|^2.
\end{align}
Plugging these back into our original equation we obtain the following necessary condition
\begin{align}
    \bigg( \frac{p^2q(1-q)}{p(1-p)} |\lambda_t + a_t|^2 \! + \! \frac{q^2p(1-p)}{q(1-q)}|\lambda_t + d_t|^2 \! - \! 2pq(\lambda_t + a_t)^T(\lambda_t + d_t)\bigg)A_tD_t(X_t^{*})^{p+q} \! &= 0.
\end{align}
Since $A_t, D_t >0$ and $0<p,q<1$, we obtain that
\begin{align}
    \bigg|\frac{1}{1-p}(\lambda_t + a_t) - \frac{1}{1-q}(\lambda_t + d_t) \bigg|^2 = 0.
\end{align}
This yields
\begin{align}
    \frac{1}{1-p}(\lambda_t + a_t) = \frac{1}{1-q}(\lambda_t + d_t).
\end{align}
Note that the derived conditions on $a_t, d_t, \alpha_t, \delta_t$ exactly match the characterization from Theorem \ref{thm:local.FPP}, where we choose $\nu$ to be a measure with two atoms located at $\{1-p, 1-q\}$. Thus, we know that $A_tx^p$ and $D_tx^q$ are forward performance processes with the optimal portfolio given by a solution to
\begin{align}
    \sigma_t\pi_t^* = \frac{1}{1-p}(\lambda_t + a_t).
\end{align}
\end{proof}

\section*{Acknowledgments}
We would like to thank Mykhaylo Shkolnikov for his valuable suggestions and multiple fruitful discussions.

\bibliographystyle{plain}
\bibliography{bibliography_www}

\end{document}